\journal{Physics Letters A}
\newtheorem{theorem}{Theorem}
\newtheorem{lemma}[theorem]{Lemma}
\newdefinition{remark}{Remark}
\begin{document}

\begin{frontmatter}

\title{Smilansky-Solomyak model with a $\delta'$-interaction}





\author[label1,label2]
{Pavel Exner}
\ead{exner@ujf.cas.cz}
\author[label2,label3]
{Ji\v{r}\'{\i} Lipovsk\'{y}\corref{cor1}}
\ead{jiri.lipovsky@uhk.cz}
\address[label1]{Doppler Institute for Mathematical Physics and Applied Mathematics, Czech Technical University,
B\v rehov{\'a} 7, 11519 Prague, Czechia}
\address[label2]{Department of Theoretical Physics, Nuclear Physics Institute, Czech Academy of Sciences, 25068 \v{R}e\v{z} near Prague, Czechia}
\address[label3]{Department of Physics, Faculty of Science, University of Hradec Kr\'alov\'e, Rokitansk\'eho 62, 500 03 Hradec Kr\'alov\'e, Czechia}

\begin{abstract}
\noindent We investigate a strongly singular version of the model of irreversible dynamics proposed by Smilansky and Solomyak in which the interaction responsible for an abrupt change of the spectrum is of $\delta'$ type. We determine the spectrum in both the subcritical and supercritical regimes and discuss its character as well as its asymptotic properties of the discrete spectrum in terms of the coupling constant.
\end{abstract}

\begin{keyword}
Smilansky-Solomyak model \sep $\delta'$-interaction \sep spectral theory
\MSC[2010] 35P15 \sep 35J10 \sep 81Q10
\end{keyword}

\end{frontmatter}


\section{Introduction}

Solutions of the Schr\"odinger equation describing an isolated system are time-reversal. We know, however, that irreversible processes are ubiquitous, the reason being that real physical systems usually interact with the environment. This is typically described through a coupling between the system Hamiltonian and that of a thermal bath. In most situations that one encounters the latter is a `large' system with an infinite number of degrees of freedom, the bath Hamiltonian has a continuous spectrum, and the presence (or absence) of irreversible modes is determined by the
energies involved rather than the coupling strength. To show that in general neither of those assumptions need to be true, Uzy Smilansky \cite{Smi1} proposed a model of a quantum graph coupled to the `bath' which may consist of  one-dimensional harmonic oscillators, in the extreme case even a single one. He showed that if the coupling exceeds a critical value, such a coupled system exhibits an irreversible behavior.

The model, in the simplest version describing a $\delta$-coupling between the Schr\"odinger operator on a line  and a harmonic oscillator with the coupling strength dependent on the oscillator variable, was later rigorously studied and generalized by Michael Solomyak and coauthors \cite{Sol4, Sol1, EvSo1, EvSo2,Sol2, Sol3, NS, RoSo} and in other papers to be mentioned a little below. What is important, in Solomyak's interpretation the model describes a two-dimensional
Schr\"odinger particle interacting with a potential composed of a regular (harmonic) and a singular part. The reason to mention that is that while in the original model the two ways of looking at the system are equivalent, it may not be so in some generalizations.

The analysis done by Solomyak and coauthors focused on spectral properties of the model, in particular, on the abrupt change they exhibit when the coupling constant exceeds a critical value. Guarneri \cite{Gu} examined the (slightly modified) model from the dynamical point of view and showed that the irreversibility means that in the supercritical regime the wave packet may escape to infinity along the singular potential `channel'. He also asked whether the model could have a version in which the $\delta$ potential is replaced by a regular one. The affirmative answer was provided in \cite{BaE1} where such a potential family was constructed, in \cite{BaE2} it was demonstrated that the effect persists even if the system is exposed to a homogeneous magnetic field (in which case the original Smilansky interpretation is ultimately lost). Moreover, it was shown that the original model has a rich resonance structure \cite{ELT2, ELT1}.

The aim of the present paper is to illustrate that the switch between different spectral regimes, associated with the passage from the `reversible' to `irreversible' behavior, is a robust effect which survives not only the above mentioned `regularization' but also a modification in the opposite direction consisting of replacing the $\delta$ potential by a more singular coupling. We shall consider the Hamiltonian formally written as
$$
  \mathbf{H}_\beta = -\frac{\partial^2}{\partial x^2} + \frac{1}{2}\left(-\frac{\partial^2}{\partial y^2}+y^2\right) + \frac{\beta}{y}\, \delta'(x)\,,
$$
where the last term is a shorthand for the interaction defined rigorously through the boundary conditions \eqref{eq-cc1}, \eqref{eq-cc2} below, cf.~Chap.~I.4 in \cite{AGHH}. We assume $\beta>0$ since $-\beta$ gives the same spectrum as $\beta$, as explained in Section~\ref{sec:description} below. We are going to show that the critical value for the spectral switch is now $\beta=2\sqrt{2}$. For $\beta>2\sqrt{2}$ the absolutely continuous spectrum coincides with the interval  $(\textstyle{\frac12},\infty)$ and there is a non-empty discrete spectrum in the interval $(0,\frac12)$. On the other hand, for $0<\beta<2\sqrt{2}$ the spectrum is absolutely continuous and covers the whole real axis; in the critical case, $\beta=2\sqrt{2}$, the spectrum is again purely absolutely continuous and covers the interval $[0,\infty)$.

Furthermore we shall investigate the asymptotic behavior of the discrete spectrum. We will show that for $\beta$ large there is a single eigenvalue for which we provide an approximate expression. On the other hand, the number of eigenvalues increases as $\beta\to 2\sqrt{2}+$ and we will describe the way in which they accumulate. The main results will be stated in the next section, the rest of the paper is devoted to their proofs. Our task will be simplified by the fact that at several places we will be able to employ the same arguments as used in the $\delta$ case borrowing them from \cite{Sol1, NS}.

Before proceeding further, a comment is due on the notion of the interaction strength. In contrast to the $\delta$ potential which is `natural' in the sense it can be regarded as a sharply localized potential well or barrier \cite[Sec.~I.3.2]{AGHH}, the $\delta'$ interaction is a much more involved object \cite{ENZ1} and one can speak of its strength only with some license. Being defined in the common way, a single attractive $\delta'$ interaction on the line has the eigenvalue $-\frac{4}{\beta^2}$, cf. \cite[Thm.~I.4.3]{AGHH}, which motivates us to say that it is \emph{weak} for $\beta$ \emph{large} and vice versa.

\section{Description of the model and main results}\label{sec:description}

Our first task is to define properly the Hamiltonian. It will be the operator in $L^2(\mathbb{R}^2)$ corresponding to the differential expression
$$
  \mathbf{H}_\beta \Psi(x,y)= - \frac{\partial^2 \Psi}{\partial x^2}(x,y) +\frac{1}{2}\left(- \frac{\partial^2 \Psi}{\partial y^2}(x,y) +y^2\Psi(x,y)\right)
$$
with the domain consisting of functions $\Psi\in H^2((0,\infty)\times\mathbb{R})\oplus H^2((-\infty, 0)\times\mathbb{R})$ satisfying the appropriate matching condition at $x=0$, namely
\begin{eqnarray}
  \Psi(0+,y)-\Psi(0-,y) = \frac{\beta}{y} \frac{\partial \Psi}{\partial x}(0+,y)\,,\label{eq-cc1}\\
  \frac{\partial \Psi}{\partial x}(0+,y) = \frac{\partial \Psi}{\partial x}(0-,y)\,.\label{eq-cc2}
\end{eqnarray}
The swap $\beta \to -\beta$ is equivalent to the change $y \to -y$ and hence it does not influence the spectrum.
Putting aside the trivial case $\beta=0$, we will therefore assume that $\beta> 0$ in the following.


Let $\mathfrak{m}_\mathrm{ac}$ denote the multiplicity function of the absolutely continuous spectra (see e.g. \cite[Chap. 7, Sec. 3 -- 5]{BS}). Our main results can be then stated as follows.

\begin{theorem}\label{thm-ac3} (absolutely continuous spectrum of the operators $\mathbf{H}_0$ and $\mathbf{H}_\beta$) \\
The spectrum of operator $\mathbf{H}_0$ is purely absolutely continuous, $\sigma(\mathbf{H}_0) = [\frac12,\infty)$ with $\mathfrak{m}_\mathrm{ac}(E,\mathbf{H}_0) = 2n$ for $E\in(n-\frac12,n+\frac12)$, $\,n\in \mathbb{N}$.
\\
For $\beta>2\sqrt{2}$ the absolutely continuous spectrum of $\mathbf{H}_\beta$ coincides with the spectrum of $\mathbf{H}_0$. For $\beta \leq 2\sqrt{2}$ there is a new branch of continuous spectrum added to the spectrum of $\mathbf{H}_0$ according to Theorems~\ref{thm-ac1} and~\ref{thm-ac2}. For $\beta = 2\sqrt{2}$ we have $\sigma(\mathbf{H}_\beta) = [0,\infty)$ and for $\beta < 2\sqrt{2}$ the spectrum covers the whole real line.
\end{theorem}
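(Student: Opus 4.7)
The plan is to treat the unperturbed operator by separation of variables and then reduce the full problem to a Jacobi-type recurrence that transparently exposes the critical coupling. For $\mathbf{H}_0$ I would exploit the tensor-sum structure $\mathbf{H}_0=(-\partial_x^2)\otimes I+I\otimes H_{\mathrm{osc}}$ with $H_{\mathrm{osc}}=\tfrac12(-\partial_y^2+y^2)$, using that $-\partial_x^2$ on $L^2(\mathbb{R})$ has purely absolutely continuous spectrum $[0,\infty)$ of uniform multiplicity two, while $H_{\mathrm{osc}}$ has simple eigenvalues $k+\tfrac12$, $k\in\mathbb{N}_0$, with Hermite eigenvectors $h_k$. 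Standard tensor-product spectral calculus then gives $\sigma(\mathbf{H}_0)=[\tfrac12,\infty)$, purely absolutely continuous, and counting twice the number of oscillator levels strictly below $E$ yields $\mathfrak{m}_{\mathrm{ac}}(E,\mathbf{H}_0)=2n$ on $E\in(n-\tfrac12,n+\tfrac12)$.

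For $\mathbf{H}_\beta$ the next step is to expand every element of the domain as $\Psi(x,y)=\sum_{k\ge 0}\psi_k(x)h_k(y)$. Off $x=0$ the operator decouples into one-dimensional channels $-\psi_k''+(k+\tfrac12)\psi_k$, and condition~\eqref{eq-cc2} gives $\psi_k'(0+)=\psi_k'(0-)$ for every $k$. Testing~\eqref{eq-cc1} against $y\,h_m(y)$ and invoking the three-term relation $y\,h_k=\sqrt{k/2}\,h_{k-1}+\sqrt{(k+1)/2}\,h_{k+1}$ absorbs the awkward factor $1/y$ on the right-hand side and leaves the single Jacobi-type condition
$$
\sqrt{\tfrac{m}{2}}\,d_{m-1}+\sqrt{\tfrac{m+1}{2}}\,d_{m+1}=\beta\,\psi_m'(0+),\qquad m\ge 0,
$$
with $d_m:=\psi_m(0+)-\psi_m(0-)$. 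This is the $\delta'$-analogue of the reduction that drives the analysis of the $\delta$-case in \cite{Sol1, NS}, and it encodes the entire singular interaction.

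The threshold $\beta=2\sqrt{2}$ then emerges from the asymptotic behaviour of this recurrence. Fixing $E<\tfrac12$ every channel is closed, so $\psi_m(x)=A_m\,\mathrm{sgn}(x)\,\mathrm{e}^{-\kappa_m|x|}$ with $\kappa_m=\sqrt{m+\tfrac12-E}$ (the sign flip enforced by continuity of the derivative), whence $d_m=2A_m$ and $\psi_m'(0+)=-\kappa_m d_m/2$. Substituting into the Jacobi relation and dividing by $\sqrt{m/2}$ produces the constant-coefficient limit
$$
d_{m+1}+\frac{\beta}{\sqrt{2}}\,d_m+d_{m-1}=0,
$$
whose characteristic roots $\lambda_\pm=\tfrac12\bigl(-\beta/\sqrt{2}\pm\sqrt{\beta^2/2-4}\,\bigr)$ bifurcate exactly at $\beta=2\sqrt{2}$. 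For $\beta>2\sqrt{2}$ the roots are real and reciprocal with $|\lambda_-|<1<|\lambda_+|$, so the square-summable sequences $d_m$ form only a one-parameter family, no new absolutely continuous branch appears below $\tfrac12$, and a standard perturbation-invariance argument identifies $\sigma_{\mathrm{ac}}(\mathbf{H}_\beta)$ with $\sigma(\mathbf{H}_0)$. For $\beta\le 2\sqrt{2}$ both roots lie on the unit circle, the recurrence admits bounded oscillating solutions, and generalised eigenfunctions then exist at every real $E$, producing a new branch of absolutely continuous spectrum whose precise shape and multiplicity are the content of Theorems~\ref{thm-ac1} and~\ref{thm-ac2}; combining these with the $\mathbf{H}_0$ result yields $\sigma(\mathbf{H}_\beta)=[0,\infty)$ at $\beta=2\sqrt{2}$ and $\sigma(\mathbf{H}_\beta)=\mathbb{R}$ for $\beta<2\sqrt{2}$.

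The principal obstacle is promoting this formal asymptotic analysis into a bona fide spectral statement: one must exclude a singular continuous component from the new branch and verify that the oscillating channel contributes genuine absolutely continuous spectrum with the announced multiplicity. I plan to do this by interpreting the boundary relations as a self-adjoint Jacobi operator on $\ell^2(\mathbb{N}_0)$ whose coefficients stabilise to the constants appearing in the limit recurrence, and then invoking subordinacy theory (or Weyl-disk arguments) in the manner of \cite{Sol1, NS}; the quantitative description of the new branch is precisely the work deferred to Theorems~\ref{thm-ac1}--\ref{thm-ac2}, which the present statement only needs to assemble.
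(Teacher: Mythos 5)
Your proposal follows essentially the same route as the paper: the tensor-sum analysis of $\mathbf{H}_0$, the Hermite-function expansion reducing the $\delta'$ condition to the Jacobi-type recurrence (the paper's eq.~\eqref{cc-3}), and the deferral of the rigorous absolutely continuous analysis to Theorems~\ref{thm-ac1} and~\ref{thm-ac2}, which the paper obtains by adapting the Naboko--Solomyak results for the associated Jacobi operator. Your constant-coefficient limit of the recurrence, whose characteristic roots bifurcate exactly at $\beta=2\sqrt{2}$, is a correct heuristic for the critical value that the paper leaves implicit in the cited Jacobi-operator theorems.
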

\begin{theorem}\label{thm-dis} (discrete spectrum of the operator $\mathbf{H}_\beta$ for $\beta\in (2\sqrt{2},\infty)$) \\
Assume $\beta\in (2\sqrt{2},\infty)$, then the discrete spectrum of $\mathbf{H}_\beta$ is nonempty and lies in the interval $(0,\frac12)$. The number of eigenvalues is approximately given by
$$
  \frac{1}{4\sqrt{2\left(\frac{\beta}{2\sqrt2} - 1\right)}}\quad \mathrm{as}\quad \beta\to2\sqrt{2}+\,.
$$
\end{theorem}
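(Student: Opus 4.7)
The plan is to reduce the eigenvalue problem to a three-term recurrence via expansion in the Hermite basis, and then to analyse that recurrence, both to locate the discrete spectrum and to count the eigenvalues in the regime $\beta\to 2\sqrt{2}+$. Write $\Psi(x,y)=\sum_{n\ge 0}\psi_n(x)\chi_n(y)$ in the orthonormal basis of Hermite functions $\chi_n$. Away from $x=0$, $\mathbf{H}_\beta\Psi=E\Psi$ becomes $-\psi_n''+(n+\tfrac12-E)\psi_n=0$. Since Theorem~\ref{thm-ac3} places the a.c. spectrum in $[\tfrac12,\infty)$, any eigenvalue $E$ satisfies $E<\tfrac12$, so $\kappa_n:=\sqrt{n+\tfrac12-E}>0$ for all $n$. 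The $L^2$-condition forces $\psi_n(x)=A_n^\pm e^{\mp\kappa_n x}$ on $\pm x>0$, and the derivative-matching condition \eqref{eq-cc2} projected onto $\chi_n$ gives $A_n^-=-A_n^+$: every bound state is odd in $x$.

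Projecting the jump condition \eqref{eq-cc1} onto $\chi_m$ and using $y\chi_m=\sqrt{m/2}\,\chi_{m-1}+\sqrt{(m+1)/2}\,\chi_{m+1}$, the coefficients $A_n:=A_n^+$ must satisfy the Jacobi recurrence
\begin{equation*}
\sqrt{n+1}\,A_{n+1}+\sqrt{n}\,A_{n-1}+\tfrac{\beta}{\sqrt{2}}\sqrt{n+\tfrac12-E}\,A_n=0,\qquad n\ge 0,\ A_{-1}=0.
\end{equation*}
Eigenvalues of $\mathbf{H}_\beta$ in $(-\infty,\tfrac12)$ correspond precisely to those $E$ for which the unique solution of this recurrence (with $A_{-1}=0$) lies in $\ell^2$. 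For positivity $E>0$ I would use a quadratic-form lower bound: the associated form is nonnegative for $\beta>2\sqrt{2}$, which can be verified either directly, following the strategy used for the $\delta$-case in \cite{Sol1}, or by a comparison with an auxiliary 1D Sturm--Liouville problem obtained from the recurrence. Together with $E<\tfrac12$, this places the discrete spectrum in $(0,\tfrac12)$. Existence of at least one eigenvalue for every $\beta>2\sqrt{2}$ can be obtained by a shooting/continuity argument: after the substitution $A_n=(-1)^n b_n$ the recurrence reads $\sqrt{n+1}\,b_{n+1}+\sqrt{n}\,b_{n-1}=2\alpha\sqrt{n+\tfrac12-E}\,b_n$ with $\alpha=\beta/(2\sqrt{2})>1$, whose asymptotic characteristic roots satisfy $\lambda+\lambda^{-1}=2\alpha>2$, so there is a stable (decaying) mode and the $E$ for which the solution locks onto it defines the eigenvalues.

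For the asymptotic count I would adapt to the present setting the WKB-type analysis carried out by Naboko and Solomyak \cite{NS} for the $\delta$-interaction. The idea is to introduce a Prüfer phase for the transformed recurrence and to observe that for large $n$ the effective equation is, after dividing by $\sqrt{n}$ and expanding $\sqrt{n+\tfrac12-E}/\sqrt{n}=1+(1-2E)/(4n)+O(n^{-2})$, a discrete analogue of a one-dimensional Schrödinger problem whose "potential" $2(\alpha-1)-(1/2-E)/n$ has a classically allowed region of length $\sim 1/(\alpha-1)$. The number of turns of the Prüfer angle along this region, by Sturm oscillation, equals the number of eigenvalues; evaluating the semiclassical action integral in the limit $\alpha\to 1+$ yields the claimed $\frac{1}{4\sqrt{2(\alpha-1)}}$.

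The main obstacle is making the semiclassical reduction rigorous with the precise leading constant. Both diagonal and off-diagonal entries of the Jacobi matrix grow like $\sqrt{n}$, so one must control the WKB error terms uniformly as $n\to\infty$ and $\alpha\to 1+$ simultaneously, and check that the turning-point contributions do not spoil the leading order. Fortunately, the structure of the recurrence here is formally identical to the one treated in \cite{NS} up to a change of scaling in the coupling, so I would proceed by matching the two problems via the identification $\beta_{\delta'}/\sqrt{2}\leftrightarrow \beta_\delta$ and then transcribing the counting formula, verifying the numerical constant $\tfrac{1}{4\sqrt{2}}$ by an explicit evaluation of the semiclassical integral.
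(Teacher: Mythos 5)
Your reduction is the same as the paper's: expanding in Hermite functions and projecting the conditions \eqref{eq-cc1}--\eqref{eq-cc2} does give odd half-line exponentials and exactly the recurrence $(n+1)^{1/2}Q_{n+1}+2\mu\zeta_n(\Lambda)Q_n+n^{1/2}Q_{n-1}=0$ with $\mu=\beta/(2\sqrt2)$, and the localization in $(0,\frac12)$ does come from a Solomyak-type form bound plus the description of $\sigma_{\mathrm{ac}}$. Two caveats there: ``the form is nonnegative'' only gives $E\ge 0$; you need the strict bound $\mathbf{a}_\beta[\Psi]\ge\frac12\bigl(1-\frac{2\sqrt2}{\beta}\bigr)\|\Psi\|^2$ (Theorem~\ref{thm-bound}) to put the eigenvalues in the \emph{open} interval. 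More seriously, your existence argument is not a proof: the fact that the limiting characteristic equation $\lambda+\lambda^{-1}=2\alpha>2$ has a decaying root only says what an eigenvalue would look like, not that some $E$ ``locks onto'' it; a shooting argument needs monotonicity/continuity of a matching quantity in $E$, which you do not supply. The paper instead exhibits an explicit trial function $\Psi^\varepsilon=\{\psi_0^\varepsilon,\psi_1,0,\dots\}$ with $\mathbf{a}_\beta[\Psi^\varepsilon]<\frac12\|\Psi^\varepsilon\|^2$, which settles nonemptiness in two lines.

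The counting asymptotics is where the genuine gap lies. You propose a direct WKB/Pr\"ufer analysis of the unbounded, $E$-dependent recurrence and defer the evaluation of the semiclassical action; as set up, your effective potential $2(\alpha-1)-(1/2-E)/n$ does not visibly produce the constant $\frac{1}{4\sqrt2}$ (a correct symmetrization must yield the $1/n$-coefficient $q=\frac18$ in the off-diagonal entries, and your naive expansion of $\zeta_n/\sqrt n$ gives a different coefficient), and you would in any case face a two-parameter problem: for each $\mu$ you must count the $E$ admitting an $\ell^2$ solution. The paper avoids all of this by a variational device you are missing: projecting, in the $\|\cdot\|_\varepsilon$ inner product, onto the explicit exponentials $\tilde\psi_{\sqrt{n+\varepsilon}}$ converts $N_-(\frac12-\varepsilon,\mathbf{H}_\beta)$ exactly into $N_+(\mu,\mathbf{J}(\varepsilon))$ for a \emph{fixed, bounded} zero-diagonal Jacobi matrix, whose $\varepsilon\to0$ limit $\mathbf{J}_0$ has entries $\frac12(1-n^{-1})^{-1/4}=\frac12+\frac{1}{8n}(1+o(1))$; then Solomyak's ready-made theorem $N_+(\mu,\mathbf{J})\sim q\sqrt2/\sqrt{\mu-1}$ with $q=\frac18$ gives $\frac{1}{4\sqrt{2(\mu-1)}}$ with no new asymptotic analysis. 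Without either importing that counting theorem through such a reduction or actually carrying out (and error-controlling) the semiclassical computation uniformly in $n$ and $\alpha$, your argument does not establish the stated asymptotics.
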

\begin{theorem}\label{thm-oneeig} (discrete spectrum of the operator $\mathbf{H}_\beta$ for large $\beta$)
\\
For large enough $\beta$ there is a single eigenvalue which asymptotically behaves as
$$
  \Lambda_1 = \frac{1}{2} - \frac{4}{\beta^4} + \mathcal{O}\left(\beta^{-5}\right)\,.
$$
\end{theorem}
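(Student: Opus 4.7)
The plan is to separate parity sectors in $x$, reduce the odd-sector eigenvalue equation to a three-term Jacobi recurrence, and solve the resulting scalar implicit equation via a Schur complement and a Neumann expansion in powers of $\beta^{-1}$. In the even-in-$x$ sector, \eqref{eq-cc1}--\eqref{eq-cc2} force $\partial_x\Psi(0+,y)=0$, so $\mathbf{H}_\beta$ restricts to a direct sum of two Neumann problems on the half-planes with spectrum $[\tfrac12,\infty)$ and no point spectrum. Consequently any eigenvalue $\Lambda<\tfrac12$ must live in the odd sector, where the matching conditions reduce on the half-plane $x>0$ to the Robin-type boundary condition $\partial_x\Psi(0+,y)=\tfrac{2y}{\beta}\Psi(0+,y)$.

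Into this problem I insert the ansatz $\Psi(x,y)=\sum_{n\ge 0}A_n\psi_n(y)e^{-\kappa_n x}$, where $\kappa_n:=\sqrt{n+\tfrac12-\Lambda}>0$ and $\psi_n$ are the normalized harmonic-oscillator eigenfunctions. Using the ladder identity $y\psi_n=\tfrac{1}{\sqrt2}(\sqrt n\,\psi_{n-1}+\sqrt{n+1}\,\psi_{n+1})$, the Robin condition becomes the symmetric three-term recurrence
\begin{equation*}
  \sqrt{2m}\,A_{m-1}+\beta\kappa_m A_m+\sqrt{2(m+1)}\,A_{m+1}=0,\qquad m\ge 0,\quad A_{-1}:=0,
\end{equation*}
so that $\Lambda$ is an eigenvalue exactly when the Jacobi operator $J(\Lambda)=\beta K+T$ on $\ell^2(\mathbb{N}_0)$, with $K=\mathrm{diag}(\kappa_n)$ and off-diagonal entries $T_{m,m+1}=\sqrt{2(m+1)}$, has a non-trivial $\ell^2$-kernel. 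Writing $A=(A_0,A_\perp)$, the complementary block $J_{11}=\beta K_{11}+T_{11}$ obtained by deleting the first row and column is, for $\beta$ large, a bounded perturbation of the strictly positive diagonal $\beta K_{11}$ (since $\kappa_n\ge\sqrt{\tfrac12}$ for $n\ge 1$ and $\Lambda\le\tfrac12$) and hence invertible. A Schur complement then reduces the kernel condition to the scalar implicit equation
\begin{equation*}
  \beta\kappa_0=2\,\bigl(J_{11}^{-1}\bigr)_{11}.
\end{equation*}

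The asymptotics follow from the Neumann series
\begin{equation*}
  J_{11}^{-1}=\sum_{k\ge 0}(-1)^k\beta^{-k-1}\bigl(K_{11}^{-1}T_{11}\bigr)^kK_{11}^{-1}.
\end{equation*}
Since $T_{11}$ is a nearest-neighbour Jacobi operator, the $(1,1)$-entry of $(K_{11}^{-1}T_{11})^kK_{11}^{-1}$ vanishes for every odd $k$ (one cannot return to the first index in an odd number of $\pm 1$ steps), so the first correction to $1/(\beta\kappa_1)$ is of order $\beta^{-3}$ and $(J_{11}^{-1})_{11}=1/(\beta\kappa_1)+\mathcal{O}(\beta^{-3})$. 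Because $\kappa_1=\sqrt{1+(\tfrac12-\Lambda)}\to 1$ as $\Lambda\to\tfrac12$, the scalar equation becomes $\beta\kappa_0=2/\beta+\mathcal{O}(\beta^{-3})$, giving $\kappa_0=2/\beta^2+\mathcal{O}(\beta^{-4})$ and therefore $\Lambda_1=\tfrac12-\kappa_0^2=\tfrac12-4/\beta^4+\mathcal{O}(\beta^{-5})$, which is the claim.

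Uniqueness for large $\beta$ is obtained by monotonicity. The function $\Lambda\mapsto\beta\kappa_0=\beta\sqrt{\tfrac12-\Lambda}$ is strictly decreasing, while $K_{11}$ and hence (for $\beta$ large, when $J_{11}$ is a positive operator) the operator $J_{11}$ decrease in $\Lambda$, so $(J_{11}^{-1})_{11}$ is strictly \emph{increasing} in $\Lambda$; consequently $\beta\kappa_0-2(J_{11}^{-1})_{11}$ is strictly decreasing on $(0,\tfrac12)$, positive at $\Lambda=0$ (where it equals $\beta/\sqrt2-\mathcal{O}(\beta^{-1})$) and negative as $\Lambda\to\tfrac12^-$ (where it equals $-2/\beta+\mathcal{O}(\beta^{-3})$), so it has exactly one zero. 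The main technical obstacle will be turning the formal Neumann series into a rigorous estimate with the stated error uniformly in $\Lambda\in(0,\tfrac12)$; this amounts to a quantitative diagonal-dominance bound for the unbounded Jacobi matrix $T_{11}$ against $\beta K_{11}$, which should follow from standard tail estimates on banded matrices with growing diagonal.
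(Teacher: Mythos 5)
Your proposal is correct, and while it ends up at the same three-term recurrence as the paper (your $\sqrt{2m}\,A_{m-1}+\beta\kappa_m A_m+\sqrt{2(m+1)}\,A_{m+1}=0$ is exactly \eqref{eq-jaceq} after the substitution $Q_n=(n+\tfrac12)^{1/4}C_n$ and multiplication by $\sqrt2$, with $\kappa_n=\zeta_n$ and $\beta=2\sqrt2\mu$), the surrounding argument is genuinely different. The paper proves existence of spectrum below $\tfrac12$ by an explicit variational trial function, gets uniqueness for large $\beta$ from Theorem~\ref{thm-sol32} together with the fact that the eigenvalues of $\mathbf{J}_0$ accumulate only at $1$, and extracts the asymptotics by normalizing the kernel vector $\{Q_n\}$ and deriving a priori bounds $Q_0=1+\mathcal{O}(\mu^{-2})$, $Q_1=\mathcal{O}(\mu^{-1})$ before feeding them into the $n=0$ equation. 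You instead use the parity symmetry to isolate the odd sector as a Robin problem, reduce the kernel condition by a Schur complement to the scalar secular equation $\beta\kappa_0=2(J_{11}^{-1})_{11}$, expand in a Neumann series (the odd-order terms vanishing by the walk-parity argument), and get existence and uniqueness in one stroke from monotonicity of the secular function on $(0,\tfrac12)$. Your route is self-contained (it does not need the $\mathbf{J}_0$ machinery), actually yields the sharper remainder $\mathcal{O}(\beta^{-6})$ since $\kappa_0=2\beta^{-2}+\mathcal{O}(\beta^{-4})$ is squared, and sidesteps a sign slip in the paper's computation (there $Q_1$ should come out negative so that $(\tfrac12-\Lambda_1)^{1/2}$ is positive). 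Two small items to tidy up: the convergence of the Neumann series is not really an obstacle, since $\kappa_m\ge\sqrt{m}$ for $\Lambda\le\tfrac12$ gives $\|K_{11}^{-1}T_{11}\|\le 2+\sqrt2$ uniformly in $\Lambda$; and you should note that the Schur-complement solution $A_\perp=-\sqrt2\,A_0J_{11}^{-1}e_1$ decays fast enough in $n$ (the $k$-th Neumann term is supported within distance $k$ of the first index and is $\mathcal{O}(\beta^{-k-1})$) that the resulting $\Psi$ genuinely lies in the operator domain, so the zero of the secular equation is an honest eigenvalue.
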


\section{Bound on the quadratic form}

It is straightforward to check that the operator $\mathbf{H}_\beta$ with a fixed $\beta>0$ defined above is self-adjoint and the quadratic form $\mathbf{a}_\beta[\Psi] = \mathbf{a}_0[\Psi]+\frac{1}{\beta}\mathbf{b}[\Psi]$
\begin{eqnarray*}
  \mathbf{a}_0[\Psi] \!=\! \int_{\mathbb{R}^2} \left(\left|\frac{\partial \Psi}{\partial x}\right|^2+\frac{1}{2}\left|\frac{\partial \Psi}{\partial y}\right|^2+\frac{1}{2}y^2 |\Psi|^2\right)\,\mathrm{d}x\mathrm{d}y\,,\quad \mathbf{b}[\Psi] \!=\! \int_{\mathbb{R}}y \left|\Psi(0+,y)-\Psi(0-,y)\right|^2 \,\mathrm{d}y
\end{eqnarray*}
is associated with it. The domain $D = \mathrm{dom\,}\mathbf{a}_0$ of the form $\mathbf{a}_0$ is
$$
  D = \left\{\Psi\in H^1((0,\infty)\times\mathbb{R})\oplus H^1((-\infty, 0)\times\mathbb{R})\,; \mathbf{a}_0[\Psi]<\infty\right\}
$$
Mimicking the reasoning of \cite{BaE2, BEL} one can check that
the form $\mathbf{a}_0$ is closed on $D$; note that the corresponding
self-adjoint operator separates variables and the $x$-part
describes the motion on line which is free except for the
Neumann condition at $x=0$. Furthermore, using the bounds
\eqref{eq-bound1} and \eqref{eq-bound2} below one can prove in analogy with 
\cite[Proposition 2.2]{BaE2} that $\mathbf{a}_\beta$ is closed on $D$.

Let us add that the quadratic form method cannot be applied
for $\beta < 2\sqrt{2}$ when the spectrum is unbounded from
below. It that case we can proceed as in the case of the usual
Smilansky model \cite{BaE2} establishing the existence of a self-adjoint
Hamiltonian using commutativity with a suitable conjugation.

Our first task is to prove the following bound:
\begin{theorem}\label{thm-bound}
If $\beta \geq 2\sqrt{2}$ it holds
$$
  \mathbf{a}_\beta [\Psi] \geq \frac{1}{2}\left(1-\frac{2\sqrt{2}}{\beta}\right)\|\Psi\|^2\,.
$$
\end{theorem}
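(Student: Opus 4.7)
The plan is to prove the sharper estimate $-\mathbf{b}[\Psi] \le 2\sqrt{2}\,\mathbf{a}_0[\Psi]$ on the form domain $D$. Once this is in hand, the theorem is immediate: one has $\mathbf{a}_\beta[\Psi] = \mathbf{a}_0[\Psi] + \frac{1}{\beta}\mathbf{b}[\Psi] \ge \bigl(1-\frac{2\sqrt{2}}{\beta}\bigr)\mathbf{a}_0[\Psi] \ge \frac{1}{2}\bigl(1-\frac{2\sqrt{2}}{\beta}\bigr)\|\Psi\|^2$, where the first inequality requires $\beta \ge 2\sqrt{2}$ (so the coefficient of $\mathbf{a}_0$ is non-negative) and the second is the harmonic oscillator lower bound $\mathbf{a}_0 \ge \tfrac{1}{2}\|\Psi\|^2$ sitting inside $\mathbf{a}_0$ through the $y$-variable part $\tfrac{1}{2}(-\partial_y^2+y^2)$.

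The cornerstone is a weighted trace estimate (presumably the bounds referenced as \eqref{eq-bound1}--\eqref{eq-bound2} in the excerpt): for $\Psi\in D$,
$$\int_{\mathbb{R}} |y|\,|\Psi(0\pm,y)|^2\,dy \le \sqrt{2}\,\mathbf{a}_0^\pm[\Psi],$$
where $\mathbf{a}_0^\pm$ denotes the restriction of $\mathbf{a}_0$ to the half-plane $\{\pm x>0\}\times\mathbb{R}$. I would derive it starting from the identity $|\Psi(0+,y)|^2 = -2\operatorname{Re}\int_0^\infty \overline{\Psi}\,\partial_x\Psi\,dx$ (first on a dense smooth core with decay at infinity, then extending by density), multiplying by $|y|$, integrating over $y$ and applying Cauchy--Schwarz on the product space to get $\int|y|\,|\Psi(0+,y)|^2\,dy \le 2\sqrt{A\,B}$ with $A=\int_{(0,\infty)\times\mathbb{R}} y^2|\Psi|^2$ and $B=\int_{(0,\infty)\times\mathbb{R}}|\partial_x\Psi|^2$. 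The balanced AM--GM inequality $2\sqrt{AB} \le \frac{A}{\sqrt{2}} + \sqrt{2}\,B = \sqrt{2}\bigl(\frac{A}{2}+B\bigr)$ then closes the bound, because the combination $\frac{A}{2}+B$ equals $\int_{(0,\infty)\times\mathbb{R}}\bigl(\tfrac{1}{2}y^2|\Psi|^2 + |\partial_x\Psi|^2\bigr) \le \mathbf{a}_0^+[\Psi]$ (the full form additionally carries the non-negative term $\tfrac{1}{2}\int|\partial_y\Psi|^2$). A symmetric argument gives the analogous estimate on $(-\infty,0)\times\mathbb{R}$.

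With the trace bound available, expand $|\Psi(0+,y)-\Psi(0-,y)|^2 = |\Psi(0+,y)|^2 + |\Psi(0-,y)|^2 - 2\operatorname{Re}\Psi(0+,y)\overline{\Psi(0-,y)}$ and integrate against $y$. The first two terms are controlled directly by $\sqrt{2}\,\mathbf{a}_0^+[\Psi]$ and $\sqrt{2}\,\mathbf{a}_0^-[\Psi]$ in absolute value. The cross term requires Cauchy--Schwarz in $y$ followed by AM--GM once more,
$$\Bigl|2\operatorname{Re}\!\int_{\mathbb{R}} y\,\Psi(0+,y)\overline{\Psi(0-,y)}\,dy\Bigr| \le 2\sqrt{\sqrt{2}\,\mathbf{a}_0^+[\Psi] \cdot \sqrt{2}\,\mathbf{a}_0^-[\Psi]} \le \sqrt{2}\bigl(\mathbf{a}_0^+[\Psi] + \mathbf{a}_0^-[\Psi]\bigr) = \sqrt{2}\,\mathbf{a}_0[\Psi].$$
Summing the three contributions yields $|\mathbf{b}[\Psi]| \le 2\sqrt{2}\,\mathbf{a}_0[\Psi]$, and in particular $-\mathbf{b}[\Psi]\le 2\sqrt{2}\,\mathbf{a}_0[\Psi]$, which is what we needed.

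The main obstacle is securing the sharp constant $\sqrt{2}$ in the trace estimate. A naive application of the two available bounds $A \le 2\mathbf{a}_0^+$ and $B\le\mathbf{a}_0^+$ separately inside the Cauchy--Schwarz gives only $2\sqrt{AB}\le 2\sqrt{2}\,\mathbf{a}_0^+$, which would cascade to $-\mathbf{b}\le 4\sqrt{2}\,\mathbf{a}_0$ and shift the critical coupling to $4\sqrt{2}$; it is the precise $\tfrac{1}{2}$ weight on $y^2|\Psi|^2$ in $\mathbf{a}_0^\pm$, combined with the balanced AM--GM step above, that aligns the constant with the spectral threshold $\beta=2\sqrt{2}$ asserted in Theorem~\ref{thm-ac3}.
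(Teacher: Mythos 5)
Your proof is correct, and it takes a genuinely different route from the paper. You work directly in the $(x,y)$ variables: the weighted trace bound $\int_{\mathbb{R}}|y|\,|\Psi(0\pm,y)|^2\,\mathrm{d}y\le\sqrt{2}\,\mathbf{a}_0^{\pm}[\Psi]$, obtained from $|\Psi(0+,y)|^2=-2\,\mathrm{Re}\int_0^\infty\overline{\Psi}\,\partial_x\Psi\,\mathrm{d}x$ plus Cauchy--Schwarz and the \emph{balanced} AM--GM step $2\sqrt{AB}\le A/\sqrt{2}+\sqrt{2}B$, together with your handling of the cross term, yields $|\mathbf{b}[\Psi]|\le 2\sqrt{2}\,\mathbf{a}_0[\Psi]$ with the same constant as the paper; the rest ($\mathbf{a}_\beta\ge(1-2\sqrt{2}/\beta)\mathbf{a}_0\ge\frac12(1-2\sqrt{2}/\beta)\|\Psi\|^2$) is identical. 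The paper instead expands $\Psi$ in the Hermite basis, uses the three-term recurrence $\sqrt{n+1}\chi_{n+1}-\sqrt{2}y\chi_n+\sqrt{n}\chi_{n-1}=0$ to turn $\mathbf{b}$ into a bilinear form in the mode jumps $\psi_n(0+)-\psi_n(0-)$, and then applies the elementary inequality $\sqrt{n}+\sqrt{n+1}<\sqrt{2(2n+1)}$ together with a one-dimensional trace lemma (Lemma~\ref{lem-2}) mode by mode. Your argument is more elementary and self-contained --- no special-function identities are needed, and you correctly diagnose that the precise $\tfrac12$ weight on $y^2|\Psi|^2$ is what aligns the constant with the threshold $2\sqrt{2}$ --- but the paper's modewise route has the advantage that the intermediate objects (the expansion \eqref{eq-b} and the bound \eqref{eq-bound1}) are exactly what is reused later to set up the Jacobi operator and the discrete-spectrum counting, so the Hermite machinery is not overhead in context. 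Both arguments deliver the same constant and both are valid.
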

Before proving this theorem, we provide two technical lemmata:
\begin{lemma}\label{lem-1}
For complex numbers $c$, $d$ it holds $2 |\mathrm{Re\,(\bar{c}d})|\leq |c|^2+|d|^2$.
\end{lemma}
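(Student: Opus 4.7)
The plan is to derive this as a direct consequence of the nonnegativity of $|c\pm d|^2$, which is the standard route for such AM-GM-type estimates. First I would expand
\[
  0 \leq |c-d|^2 = (c-d)\overline{(c-d)} = |c|^2 + |d|^2 - 2\mathrm{Re}(\bar c d),
\]
which immediately gives the upper bound $2\mathrm{Re}(\bar c d) \leq |c|^2+|d|^2$. Doing the same computation with $|c+d|^2 \geq 0$ yields the companion inequality $-2\mathrm{Re}(\bar c d) \leq |c|^2+|d|^2$.

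Combining the two bounds, one obtains $2|\mathrm{Re}(\bar c d)| \leq |c|^2+|d|^2$, which is the claim. An alternative, equally short route would be to apply $|\mathrm{Re}(\bar c d)|\leq|\bar c d|=|c||d|$ and then the real AM-GM inequality $2|c||d|\leq|c|^2+|d|^2$; either approach is elementary and no genuine obstacle arises, so the lemma serves purely as a book-keeping tool for the proof of Theorem~\ref{thm-bound}.
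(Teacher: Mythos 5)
Your proposal is correct and follows exactly the same route as the paper, which also expands $|c\pm d|^2\geq 0$ to obtain $|c|^2+|d|^2\geq \mp(\bar c d+c\bar d)$ and combines the two signs. Nothing to add.
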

\begin{proof}
We have
$$
  |c\pm d|^2 \geq 0 \quad \Rightarrow \quad  |c|^2+|d|^2\geq \mp(\bar{c}d+c\bar{d})
$$
which gives the claim.
\end{proof}
\begin{lemma}\label{lem-2}
It holds
\begin{eqnarray*}
  \gamma(|\psi(0+)|^2+|\psi(0-)|^2) \leq \int_{\mathbb{R}}\left(|\psi'(x)|^2+\gamma^2|\psi(x)|^2\right)\,\mathrm{d}x\\
 \forall \psi\in H^1((0,\infty))\oplus H^1((-\infty,0))\,,\quad \gamma>0\,,
\end{eqnarray*}
with the equality attained on the subspace generated by
\begin{equation}
  \tilde \psi_\gamma (x) = \frac{\mathrm{sgn}\,x}{\sqrt{2\gamma}}\mathrm{e}^{-\gamma|x|}\,, \quad\gamma>0\,.\label{eq-utilde}
\end{equation}
\end{lemma}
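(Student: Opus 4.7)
My plan is to reduce the two-sided inequality to two one-sided bounds --- one on $(0,\infty)$ and one on $(-\infty,0)$ --- and add them. On each half-line the key input is Lemma~\ref{lem-1} applied pointwise with $c=\psi'(x)$ and $d=\gamma\psi(x)$, together with the identity $(|\psi|^2)'(x)=2\,\mathrm{Re}\,\bigl(\overline{\psi(x)}\,\psi'(x)\bigr)$, which is valid a.e.\ for $\psi\in H^1$. This yields the pointwise estimate $\pm\gamma\,(|\psi(x)|^2)'\le |\psi'(x)|^2+\gamma^2|\psi(x)|^2$, which is nothing more than expanding $|\psi'(x)\pm\gamma\psi(x)|^2\ge 0$.

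Next I would integrate over each half-line, choosing the sign so that the fundamental theorem of calculus produces the boundary value at $x=0\pm$ with the right orientation: the minus sign on $(0,\infty)$ gives $\gamma|\psi(0+)|^2\le\int_0^\infty(|\psi'|^2+\gamma^2|\psi|^2)\,\mathrm{d}x$, while the plus sign on $(-\infty,0)$ yields the analogous bound with $\psi(0-)$. The contribution at $\pm\infty$ drops out because $|\psi|^2$ and its a.e.\ derivative are both in $L^1$, so the absolutely continuous representative of $|\psi|^2$ admits a limit at $\pm\infty$ which, by integrability, must vanish. Summing the two half-line bounds gives the claimed inequality.

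For the equality statement, pointwise saturation of $|\psi'\pm\gamma\psi|^2\ge 0$ is equivalent to $\psi'(x)=-\gamma\psi(x)$ on $(0,\infty)$ and $\psi'(x)=\gamma\psi(x)$ on $(-\infty,0)$; solving these first-order ODEs and requiring square integrability forces $\psi$ proportional to $\mathrm{e}^{-\gamma x}$ on $\mathbb{R}_+$ and to $\mathrm{e}^{\gamma x}$ on $\mathbb{R}_-$, matching the two exponential branches of $\tilde\psi_\gamma$ in \eqref{eq-utilde}. I do not foresee any genuine obstacle: the argument is a standard one-dimensional trace-type estimate, and the only care needed is in tracking the signs and in the short justification of decay at $\pm\infty$.
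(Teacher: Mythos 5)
Your proof is correct and follows essentially the same route as the paper: expanding $|\psi'\pm\gamma\psi|^2\ge 0$ on each half-line with the sign chosen so that integration produces the boundary term at $0\pm$, and adding the two estimates. The only additions are your explicit justification of the vanishing of $|\psi|^2$ at $\pm\infty$ and the ODE characterization of the equality case, both of which the paper leaves implicit.
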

\begin{proof}
We have
\begin{multline*}
  0\leq \int_{-\infty}^0 |\psi'(x)-\gamma \psi(x)|^2\,\mathrm{d}x = \int_{-\infty}^0 (|\psi'(x)|^2+\gamma^2 |\psi(x)|^2)\,\mathrm{d}x
  -\gamma\int_{-\infty}^0(\bar{\psi}'(x)\psi(x)+\psi'(x)\bar{\psi}(x))\,\mathrm{d}x \\ = \int_{-\infty}^0 (|\psi'(x)|^2+\gamma^2 |\psi(x)|^2)\,\mathrm{d}x -\gamma[|\psi(x)|^2]_{-\infty}^{0-} \,,
\end{multline*}
and therefore
$$
  \int_{-\infty}^0 (|\psi'(x)|^2+\gamma^2 |\psi(x)|^2)\,\mathrm{d}x  \geq \gamma |\psi(0-)|^2\,.
$$
Similarly,
$$
  0\leq \int_{0}^\infty |\psi'(x)+\gamma \psi(x)|^2\,\mathrm{d}x
$$
implies
$$
  \int_{0}^\infty (|\psi'(x)|^2+\gamma^2 |\psi(x)|^2)\,\mathrm{d}x  \geq \gamma |\psi(0+)|^2\,,
$$
and combining both inequalities one obtains the result.
\end{proof}

\begin{proof}[Proof of Theorem~\ref{thm-bound}]
We will follow the construction in \cite{Sol1} and obtain a similar bound to the quadratic form $\mathbf{a}_0$. We use separation of variables and the expansion of $\Psi$ in the harmonic oscillator basis, i.e. Hermite functions in the variable $y$ normalized in $L^2(\mathbb{R})$. This yields
\begin{equation}
  \Psi(x,y) = \sum_{n\in \mathbb{N}_0} \psi_n(x) \chi_n(y)\,,\label{eq-expansion}
\end{equation}
where the symbol $\mathbb{N}_0$ stands for non-negative integers. Inserting the expansion into the form $\mathbf{a}_0$ and using  the known spectrum of the harmonic oscillator we find
\begin{equation}
  \mathbf{a}_0[\Psi] = \sum_{n\in\mathbb{N}_0}\int_{\mathbb{R}} \left(|\psi_n'(x)|^2+\left(n+\frac{1}{2}\right)|\psi_n(x)|^2\right)\,\mathrm{d}x \label{eq-a0}
\end{equation}
Using twice more the expansion~\eqref{eq-expansion} in combination with the relations satisfied by Hermite functions,
\begin{equation}
  \sqrt{n+1}\chi_{n+1}(y)-\sqrt{2}y\chi_n(y) + \sqrt{n}\chi_{n-1}(y) = 0\,,\quad n\in \mathbb{N}_0\,, \label{eq-hermite}
\end{equation}
we obtain	
\begin{multline}
  \mathbf{b}[\Psi] = \frac{1}{\sqrt{2}}\int_{\mathbb{R}} \sum_{m\in \mathbb{N}_0}\sum_{n\in \mathbb{N}_0} (\bar{\psi}_m(0+)-\bar{\psi}_m(0-))\bar{\chi}_m(y)(\psi_n(0+)-\psi_n(0-)) 
	 \left[\sqrt{n+1}\chi_{n+1}(y)+\sqrt{n}\chi_{n-1}(y)\right]\,\mathrm{d}y \\ = \frac{1}{\sqrt{2}} \sum_{n\in\mathbb{N}_0} [(\bar{\psi}_{n+1}(0+)-\bar{\psi}_{n+1}(0-))\sqrt{n+1}
 + (\bar{\psi}_{n-1}(0+)-\bar{\psi}_{n-1}(0-))\sqrt{n}](\psi_n(0+)-\psi_n(0-)) =
\\ = \frac{2}{\sqrt{2}}\sum_{n\in\mathbb{N}}\sqrt{n}\, \mathrm{Re\,}[(\bar{\psi}_n(0+)-\bar{\psi}_n(0-))(\psi_{n-1}(0+)-\psi_{n-1}(0-))]\,.\label{eq-b}
\end{multline}
We employed the Hermite functions orthonormality here and in the last line we have changed the summation index, $n+1\to n$, in the first part of the sum. It follows from Lemma~\ref{lem-1} that
$$
  |\mathbf{b}[\Psi]| \leq \frac{1}{\sqrt{2}} \sum_{n\in\mathbb{N}} \sqrt{n} (|\psi_n(0+)-\psi_n(0-)|^2+|\psi_{n-1}(0+)-\psi_{n-1}(0-)|^2)\,.
$$
Changing in turn the summation index in the second part of the sum we get
$$
  |\mathbf{b}[\Psi]|\leq \frac{1}{\sqrt{2}} \sum_{n\in \mathbb{N}_0} (\sqrt{n}+\sqrt{n+1})|\psi_n(0+)-\psi_n(0-)|^2
  \leq\sum_{n\in\mathbb{N}_0}\sqrt{2n+1} |\psi_n(0+)-\psi_n(0-)|^2\,,
$$
where we have used the inequality $\sqrt{n}+\sqrt{n+1}<\sqrt{2(2n+1)}$. Using sub\-sequently Lemmata~\ref{lem-1} and \ref{lem-2} we obtain
\begin{equation}
  |\mathbf{b}[\Psi]|\leq 2\sqrt{2}\sum_{n\in\mathbb{N}_0}\sqrt{n+\frac{1}{2}}\left(|\psi_n(0+)|^2+|\psi_n(0-)|^2\right)
 \leq 2\sqrt{2}\sum_{n\in\mathbb{N}_0}\int_{\mathbb{R}}\left(|\psi_n'(x)|^2+\left(n+\frac{1}{2}\right)|\psi_n(x)|^2\right)\,\mathrm{d}x = 2\sqrt{2}\,\mathbf{a}_0[\Psi]\,.\label{eq-bound1}
\end{equation}
In this way we arrive, using the fact that $\mathbf{a}_0[\Psi] \geq \frac{1}{2}\|\Psi\|^2$, which follows from \eqref{eq-a0}, at the bound
\begin{equation}
  \mathbf{a}_\beta[\Psi] = \mathbf{a}_0[\Psi]+\frac{1}{\beta}\mathbf{b}[\Psi]\geq \left(1-\frac{2\sqrt{2}}{\beta}\right)\,\mathbf{a}_0[\Psi]\geq \frac{1}{2}\left(1-\frac{2\sqrt{2}}{\beta}\right)\|\Psi\|^2\,,\label{eq-bound2}
\end{equation}
which proves the theorem; it means, in particular, that the quadratic form associated with $\mathbf{H}_\beta$ is positive definite for $\beta>2\sqrt{2}$.
\end{proof}

\section{The Jacobi operator}

Next we will show that our problem can be rephrased in terms of a Jacobi operator closely related to that used in  \cite{NS}, the two differ only in the parameters involved. We start from eq.~\eqref{eq-cc1}, into which we substitute the Ansatz \eqref{eq-expansion} for $\Psi$, multiply the equation by $\bar{\chi}_m(y)$ and integrate with respect to $y$ over $\mathbb{R}$. Using the orthonormality, we find
$$
  \sum_{n\in\mathbb{N}_0}\int_{\mathbb{R}} \bar{\chi}_m(y) y (\psi_n(0+)-\psi_n(0-))\chi_n(y)\,\mathrm{d}y = \beta \sum_{n\in\mathbb{N}_0}\int_{\mathbb{R}} \frac{\partial \psi_n}{\partial x}(0+)\bar{\chi}_m(y)\chi_n(y)\,\mathrm{d}y
$$
and relation \eqref{eq-hermite} then yields the condition
\begin{eqnarray}
  \beta \frac{\partial \psi_m}{\partial x}(0+) & =& \sum_{n\in\mathbb{N}_0} \frac{1}{\sqrt{2}} \int_{\mathbb{R}} (\psi_n(0+)-\psi_n(0-))\bar{\chi}_m(y) \left(\sqrt{n+1}\chi_{n+1}(y)+\sqrt{n}\chi_{n-1}(y)\right)\,\mathrm{d}y \nonumber \\ &=& \frac{\sqrt{m}}{\sqrt{2}} (\psi_{m-1}(0+)-\psi_{m-1}(0-)) + \frac{\sqrt{m+1}}{\sqrt{2}} (\psi_{m+1}(0+)-\psi_{m+1}(0-))\,,\label{cc-3}
\end{eqnarray}
which characterizes the solution `jump' at the axis $x=0$. On the other hand, the condition~\eqref{eq-cc2} implies
\begin{equation}
  \frac{\partial \psi_n}{\partial x}(0+) = \frac{\partial \psi_n}{\partial x}(0-) \label{cc-4}
\end{equation}
for the coefficient functions. Consider now the eigenvalue problem for the operator $\mathbf{H}_\beta$, which is equivalent to the set of equations
\begin{equation}
  - \phi_n''(x)+(n+\frac12-\Lambda) \phi_n(x) = 0\,,\quad x=0\,,\quad n\in \mathbb{N}_0 \label{eq-un}
\end{equation}
under the matching conditions~\eqref{cc-3} and \eqref{cc-4}, for $\phi_n\restriction\mathbb{R}_{\pm}\in H^2(\mathbb{R}_{\pm})$ where $\Lambda$ is the sought eigenvalue.

We define $\zeta_n(\Lambda) = \sqrt{n+\frac12-\Lambda}$ taking the branch of the square root which is analytic in $\mathbb{C}\backslash [n+\frac12,\infty)$ and for number~$\Lambda$ from this set it holds
$$
  \mathrm{Re\,}\zeta_n(\Lambda) >0\,,\quad \mathrm{Im\,}\zeta_n(\Lambda)\cdot \mathrm{Im\,}\Lambda<0\,.
$$
Clearly, solutions to the equation~\eqref{eq-un} in $L^2(\mathbb{R}_{\pm})$ are
$$
  \phi_n(x,\Lambda) = k_1(\Lambda)\, \mathrm{e}^{-\zeta_n(\Lambda)x}\,,\quad x>0\,,\quad  \phi_n(x,\Lambda) = k_2(\Lambda)\, \mathrm{e}^{\zeta_n(\Lambda)x}\,,\quad x<0\,,
$$
where from \eqref{cc-4} we have $k_1(\Lambda) = -k_2(\Lambda)$. Using the similar normalization as in \cite{NS} we can write $\phi_n(x,\Lambda) = C_n\eta_n(x,\Lambda)$ with
$$
  \eta_n(x,\Lambda) := \pm \left(n+\textstyle{\frac12}\right)^{1/4}\mathrm{e}^{\mp\zeta_n(\Lambda)x}\,.\quad x\in\mathbb{R}_{\pm}\,.
$$
Hence
\begin{eqnarray}
  \phi_n(0+,\Lambda) - \phi_n(0-,\Lambda) = 2 C_n \left(n+\textstyle{\frac12}\right)^{1/4}\,,\nonumber\\
  \frac{\partial \phi_n}{\partial x}(0+,\Lambda) = -C_n\left(n+\textstyle{\frac12}\right)^{1/4} \zeta_n(\Lambda)\,.\label{eq-derphi}
\end{eqnarray}
Substituting from here to eq.~\eqref{cc-3} we obtain the relation
\begin{equation}
  (n+1)^{1/2}\left(n+\textstyle{\frac32} \right)^{1/4}C_{n+1}+2\mu \left(n+\textstyle{\frac12} \right)^{1/4}\zeta_n(\Lambda)C_n+
   n^{1/2} \left(n-\textstyle{\frac12} \right)^{1/4} C_{n-1} = 0\,,\quad n\in \mathbb{N}_0 \label{eq-jaceq}
\end{equation}
with $\mu := \frac{\beta}{2\sqrt{2}}$. This is the same equation as in \cite{NS} and therefore it defines the same Jacobi operator $\mathcal{J}(\Lambda,\mu)$, only our parameter $\mu$ differs from the one used there.

\section{Representation of the resolvent}

Next we will prove a Krein-type formula analogous to eq. (6.6) in \cite{NS}. First we denote $\Psi_\beta\sim \{\psi_{\beta,n}\} = (\mathbf{H}_{\beta}-\Lambda)^{-1}F$, where $F\sim \{f_n\}\in \mathfrak{H} = \ell^2(\mathbb{N}_0,L^2(\mathbb{R}))$; the symbol $\sim$ represents equivalence between elements of $L^2(\mathbb{R}^2)$ and sequences of the coefficient functions. In the `free' case functions $\psi_{0,n}$ satisfy the equation
$$
  -\psi''_{0,n} + (n+\textstyle{\frac12}-\Lambda) \psi_{0,n} = f_n
$$
and belong to $H^2(\mathbb{R})$. Using integration by parts and the fact that $\psi_{0,n}$ is continuous at zero one finds that
$$
  (n+\textstyle{\frac12})^{-1/4}\int_{\mathbb{R}} \eta_n(t,\Lambda)(-\psi''_{0,n}(t))\,\mathrm{d}t
 =2 \psi'_{0,n}(0)-(n+\textstyle{\frac12}-\Lambda)(n+\textstyle{\frac12})^{-1/4}\int_{\mathbb{R}} \eta_n(t,\Lambda)\psi_{0,n}(t)\,\mathrm{d}t\,.
$$
Defining
$$
  J_n:= \int_{\mathbb{R}}\eta_n(t,\Lambda)f_n(t)\,\mathrm{d}t = (f_n,\eta_n(\cdot,\bar\Lambda))
$$
one infers from the previous equation that
\begin{equation}
  J_n = 2 (n+\textstyle{\frac12})^{1/4}\psi'_{0,n}(0)\,.\label{eq-jn}
\end{equation}

Let now $\Psi_{\beta}-\Psi_0 \sim\{\phi_n\}$, then all the $\phi_n$ have to satisfy the homogenous equation \eqref{eq-un} and their halfline components would belong to $H^2(\mathbb{R}_{\pm})$, which means that $\phi_n(x) = C_n \eta_n(x,\Lambda)$. Now defining
$$
  X_n := \textstyle{\frac12}(n+\textstyle{\frac12})^{-1/4}(\psi_{\beta,n}(0+) - \psi_{\beta,n}(0-))
$$
we find that
\begin{equation}
  X_n -C_n = \textstyle{\frac12}(n+\textstyle{\frac12})^{-1/4}(\psi_{0,n}(0+) - \psi_{0,n}(0-)) = 0\,.\label{eq-xncn}
\end{equation}
Substituting for $\psi_{\beta,n}$ into eq.~\eqref{cc-3} we get
$$
  \beta \frac{\partial \phi_n}{\partial x}(0+) + \beta \frac{\partial \psi_{0,n}}{\partial x}(0+) = \sqrt{\frac{n}{2}}\, 2 (n-\textstyle{\frac12})^{1/4}X_{n-1} + \sqrt{\frac{n+1}{2}} \,2 (n+\textstyle{\frac32})^{1/4}X_{n+1}\,.
$$
Using eqs.~\eqref{eq-derphi}, \eqref{eq-jn}, and \eqref{eq-xncn} we obtain
$$
  (n+1)^{1/2}(n+\textstyle{\frac32})^{1/4} X_{n+1}+ 2 \frac{\beta}{2\sqrt{2}}(n+\textstyle{\frac12})^{1/4}\zeta_n(\Lambda) X_n
 +n^{1/2}(n-\textstyle{\frac12})^{1/4}X_{n-1} = \frac{\beta}{2\sqrt{2}}(n+\textstyle{\frac12})^{-1/4} J_n \,.
$$
This can be written as
$$
  d_{n+1}X_{n+1}+2\mu y_n(\Lambda)X_n +d_n X_{n-1} = \mu J_n
$$
with $d_n := n^{1/2}(n+\frac12)^{1/4}(n-\frac12)^{1/4}$, $y_n := (n+\frac12)^{1/2}\zeta_n(\Lambda)$, and $\mu = \frac{\beta}{2\sqrt{2}}$. This is the same non-homogeneous equation as in \cite{NS}, only the parameter $\mu$ and the functions $\eta_n$ are defined differently. Using Jacobi operator $\mathcal{J}$ defined by the left-hand side of the previous equation we can write
\begin{equation}
  \mathcal{J}(\Lambda,\mu) X = \mu \{J_n\}\,,\quad X = \{X_n\}\,. \label{eq-j}
\end{equation}
We define the operator
$$
  \mathbf{T}(\Lambda): \ell^2(\mathbb{N}_0) \to \mathfrak{H}\,,\quad \mathbf{T}(\Lambda)\{X_n\} \sim \{X_n\eta_n(\cdot,\Lambda)\}\,.
$$
It is bounded and has a bounded inverse. Its adjoint is
$$
  \mathbf{T}(\Lambda)^*: \mathfrak{H} \to \ell^2(\mathbb{N}_0)\,,\quad \mathbf{T}(\Lambda)^*F = \left\{\int_{\mathbb{R}}f_n(x)\eta_n(x,\bar\Lambda)\,\mathrm{d}x\right\}\,,\quad F \sim\{f_n\}\,.
$$
Replacing the $\Lambda$ by $\bar\Lambda$ we obtain
$$
  \mathbf{T}(\bar\Lambda)^*F = \left\{\int_{\mathbb{R}}f_n(x)\eta_n(x,\Lambda)\,\mathrm{d}x\right\} = \{(f_n,\eta_n(\cdot,\bar\Lambda))\}\,.
$$

This makes it possible to express the resolvent of $\mathbf{H}_\beta$ in the following way analogous to Theorem 6.1 in \cite{NS}:
\begin{theorem}
Let $\beta>0$, $\mu = \frac{\beta}{2\sqrt{2}}$, and $\Lambda \not\in \mathbb{R}$. Then
$$
  (\mathbf{H}_\beta-\Lambda)^{-1} - (\mathbf{H}_0-\Lambda)^{-1} = \mathbf{T}(\Lambda) \mu \mathcal{J}(\Lambda,\mu)^{-1}\mathbf{T}(\bar\Lambda)^{*}\,.
$$
\end{theorem}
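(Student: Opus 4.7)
The plan is to recognise that nearly all the work has already been done in the paragraphs preceding the theorem, so the proof amounts to assembling the identities in the right order and dispatching the invertibility of $\mathcal J$. Fix $F\sim\{f_n\}\in\mathfrak H$ and $\Lambda\notin\mathbb R$, and set $\Psi_\beta=(\mathbf H_\beta-\Lambda)^{-1}F\sim\{\psi_{\beta,n}\}$, $\Psi_0=(\mathbf H_0-\Lambda)^{-1}F\sim\{\psi_{0,n}\}$. Since the two sets of Hermite coefficients solve the same componentwise inhomogeneous ODE and their difference $\{\phi_n\}$ has restrictions in $H^2(\mathbb R_\pm)$ with matching first derivatives at $x=0$ by \eqref{cc-4}, the structure of the Weyl solutions forces $\phi_n(x)=C_n\eta_n(x,\Lambda)$ with the $\eta_n$ introduced just above \eqref{eq-derphi}.

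Next I would chain the identifications that have already been set up. First, \eqref{eq-xncn} gives $C_n=X_n$ by the continuity of $\psi_{0,n}$ at zero. Second, substituting $\psi_{\beta,n}=\psi_{0,n}+C_n\eta_n$ into the jump condition \eqref{cc-3} and using the boundary-value formulas \eqref{eq-derphi} together with \eqref{eq-jn} produces precisely \eqref{eq-j}, namely $\mathcal J(\Lambda,\mu)X=\mu\{J_n\}$. Third, the defining formula for $\mathbf T(\bar\Lambda)^*$ identifies $\{J_n\}$ with $\mathbf T(\bar\Lambda)^*F$. Granting for the moment that $\mathcal J(\Lambda,\mu)$ is boundedly invertible, inverting and applying $\mathbf T(\Lambda)$ yields
$$
  (\mathbf H_\beta-\Lambda)^{-1}F - (\mathbf H_0-\Lambda)^{-1}F \;=\; \mathbf T(\Lambda)X \;=\; \mathbf T(\Lambda)\,\mu\,\mathcal J(\Lambda,\mu)^{-1}\mathbf T(\bar\Lambda)^*F,
$$
which is the asserted operator identity, since $F$ was arbitrary.

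The main obstacle is therefore the invertibility of $\mathcal J(\Lambda,\mu)$ on $\ell^2(\mathbb N_0)$ for every non-real $\Lambda$. The quickest route, and the one I would take, is to appeal directly to the corresponding statement in \cite{NS}: the Jacobi recurrence \eqref{eq-jaceq} coincides with the one analysed there after the reparametrisation $\mu=\beta/(2\sqrt2)$, only the basis functions $\eta_n$ being normalised differently, so the invertibility is inherited. Should a self-contained argument be desired, it follows from the sign rule $\mathrm{Im}\,\zeta_n(\Lambda)\cdot\mathrm{Im}\,\Lambda<0$ built into the branch choice: this forces $\mathrm{Im}\,y_n(\Lambda)$ to have a single sign independent of $n$ and to be bounded below away from zero on compact subsets of $\mathbb C\setminus\mathbb R$, so the diagonal imaginary part dominates the real off-diagonal entries in the sesquilinear form of $\mathcal J$, yielding $|\mathrm{Im}(\mathcal J X,X)|\ge c(\Lambda)\|X\|^2$ and hence the required bounded inverse. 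Boundedness of $\mathbf T(\Lambda)$ and $\mathbf T(\bar\Lambda)^*$, which has already been asserted where these operators are introduced, is the final routine ingredient.
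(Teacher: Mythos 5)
Your proposal is correct and follows essentially the same route as the paper: the paper's proof likewise just chains the previously established identities $\{\phi_n\}=\{C_n\eta_n\}$, $C_n=X_n$, $\mathcal{J}(\Lambda,\mu)X=\mu\{J_n\}=\mu\mathbf{T}(\bar\Lambda)^*F$ to get $\Psi_\beta-\Psi_0=\mathbf{T}(\Lambda)\mu\mathcal{J}(\Lambda,\mu)^{-1}\mathbf{T}(\bar\Lambda)^*F$. Your additional discussion of the invertibility of $\mathcal{J}(\Lambda,\mu)$ is a sensible supplement to a point the paper leaves implicit (inheriting it from the analysis in \cite{NS}).
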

\begin{proof}
From \eqref{eq-j} we find that $X = \mu \mathcal{J}(\Lambda,\mu)^{-1}\mathbf{T}(\bar\Lambda)^* F$. This yields
$$
  \Psi_\beta -\Psi_0 = \Phi \sim \{\phi_n\} = \mathbf{T}(\Lambda) \{C_n\} = \mathbf{T}(\Lambda) \{X_n\} = \mathbf{T}(\Lambda) \mu \mathcal{J}(\Lambda,\mu)^{-1}\mathbf{T}(\bar\Lambda)^* F\,,
$$
which implies the claim of the theorem.
\end{proof}

\section{Absolutely continuous spectrum of $\mathbf{H}_\beta$}

As usual is the situation when the resolvent allows for a Krein-type formula representation, the spectrum due to the perturbation is encoded in the `denominator', i.e. the operator $\mathcal{J}(\Lambda,\mu)$. In particular, one can use it to find the absolutely continuous spectrum of the operator $\mathbf{H}_\beta$. The argument is no way simple, but since the Jacobi operator involved is, up the modification mentioned, the same as in \cite{NS} one can easily adapt the considerations of that paper (see also Theorem~3.1 there) to arrive at the following conclusions:
\begin{theorem}\label{thm-ac1}
\begin{eqnarray*}
	\sigma_{\mathrm{ac}}(\mathbf{H}_\beta) &\!=\!& \sigma_\mathrm{ac}(\mathbf{H}_0)\cup \sigma_\mathrm{ac}(\mathcal{J}_0(\beta/(2\sqrt{2})))\,,\\
	\mathfrak{m}_\mathrm{ac}(E,\mathbf{H}_\beta) &\!=\!& \mathfrak{m}_\mathrm{ac}(E,\mathbf{H}_0) + \mathfrak{m}_\mathrm{ac}(E,\mathcal{J}_0(\beta/(2\sqrt{2}))) \,.
\end{eqnarray*}
where
$$
  \mathcal{J}_0 (\mu) := D \mathcal{S}+\mathcal{S}^* D + 2\mu Y_0
$$
with
\begin{eqnarray*}
  \mathcal{D}\,, \mathcal{S}: \ell^2(\mathbb{N}_0) \mapsto \ell^2(\mathbb{N}_0)\,,\quad 
  \mathcal{D} \{\omega_n\} : \{r_0, r_1,\dots\} \mapsto \{\omega r_0, \omega_1 r_1,\dots\}\,,\\
  D := \mathcal{D}(d_n)\,,\quad Y_0:=\mathcal{D}\{n+1/2\}\,,\quad 
  \mathcal{S} : \{r_0,r_1, \dots\} \mapsto \{0,r_0,r_1,\dots\}\,.
\end{eqnarray*}
We recall that $d_n := n^{1/2}(n+\frac12)^{1/4}(n-\frac12)^{1/4}$.
\end{theorem}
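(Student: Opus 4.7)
The plan is to derive the statement from the Krein-type resolvent formula just established and then invoke a direct adaptation of the spectral analysis carried out in \cite{NS}. Since the Jacobi operator $\mathcal{J}(\Lambda,\mu)$ that enters our Krein formula differs from the one in \cite{NS} only in that the coupling parameter is here $\mu = \beta/(2\sqrt{2})$, once the factors $\mathbf{T}(\Lambda)$ and $\mathbf{T}(\bar\Lambda)^{*}$ are shown to be spectrally neutral, every compactness, limiting-absorption and multiplicity argument from \cite{NS} transfers verbatim.

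First I would verify that $\mathbf{T}(\Lambda)$ and $\mathbf{T}(\bar\Lambda)^{*}$ are bounded with bounded inverses, analytic in $\Lambda$ on each connected component of $\mathbb{C}\setminus\sigma(\mathbf{H}_{0})$, and admit continuous boundary values to the real axis away from the threshold set $\{n+\tfrac{1}{2}:\, n\in\mathbb{N}_{0}\}$. This is immediate from the explicit exponential form of $\eta_{n}(x,\Lambda)$ and the branch-cut structure of $\zeta_{n}(\Lambda)$ prescribed in the preceding section. Consequently any singularity of $(\mathbf{H}_{\beta}-\Lambda)^{-1}$ that is not already carried by $(\mathbf{H}_{0}-\Lambda)^{-1}$ must come from the inverse of $\mathcal{J}(\Lambda,\mu)$; this identifies the \emph{new} branch of the absolutely continuous spectrum with that of the Jacobi operator and, via Stone's formula, delivers the additive decomposition of the multiplicity function.

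The remaining step is to replace $\mathcal{J}(\Lambda,\mu)$ by the $\Lambda$-independent operator $\mathcal{J}_{0}(\mu)$. Expanding $y_{n}(\Lambda) = (n+\tfrac{1}{2})^{1/2}\zeta_{n}(\Lambda)$ for large $n$ one sees that $\mathcal{J}(\Lambda,\mu)$ differs from $\mathcal{J}_{0}(\mu)$ plus an explicit bounded $\Lambda$-dependent diagonal piece by a compact remainder in $\ell^{2}(\mathbb{N}_{0})$, so that the essential and absolutely continuous spectra of $\mathcal{J}(\Lambda,\mu)$ are governed by those of $\mathcal{J}_{0}(\mu)$ alone. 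This reduction is the actual \emph{hard} part of the argument and the reason the authors note that the proof is by no means simple: in \cite{NS} it is coupled with a WKB/subordinacy analysis of the generalized eigenvectors of $\mathcal{J}_{0}(\mu)$ in order to derive the sharp limiting-absorption estimates and the exact multiplicity count. In our setting nothing new needs to be added to that analysis beyond re-reading it with the normalization $\mu = \beta/(2\sqrt{2})$, and the claimed identities for $\sigma_{\mathrm{ac}}(\mathbf{H}_{\beta})$ and $\mathfrak{m}_{\mathrm{ac}}(E,\mathbf{H}_{\beta})$ then follow.
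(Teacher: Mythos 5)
Your proposal follows essentially the same route as the paper: both reduce the statement to the Krein-type resolvent formula $(\mathbf{H}_\beta-\Lambda)^{-1}-(\mathbf{H}_0-\Lambda)^{-1}=\mathbf{T}(\Lambda)\,\mu\,\mathcal{J}(\Lambda,\mu)^{-1}\mathbf{T}(\bar\Lambda)^{*}$, note that the operators $\mathbf{T}$ are bounded with bounded inverses so the new spectrum is carried by $\mathcal{J}(\Lambda,\mu)$, and then import the analysis of Naboko--Solomyak \cite{NS} wholesale, since the Jacobi operator coincides with theirs up to the reparametrization $\mu=\beta/(2\sqrt{2})$. The paper in fact gives even less detail than you do (it simply cites \cite{NS}), so your additional remarks on the compact reduction of $\mathcal{J}(\Lambda,\mu)$ to $\mathcal{J}_0(\mu)$ are a welcome, correct elaboration of the same argument.
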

\begin{theorem}\label{thm-ac2}
\begin{eqnarray*}
	\sigma(\mathcal{J}_0(\mu)) &\!=\!& (-\infty,\infty)\quad \mathrm{for}\quad \mu<1\,,\\
	\sigma(\mathcal{J}_0(1)) &\!=\!& [0,\infty)\,,\\
	\sigma_\mathrm{ac}(\mathcal{J}_0(\mu)) &\!=\!& \emptyset\quad \mathrm{for}\quad \mu>1\,,\\
	\mathfrak{m}_\mathrm{ac}(E,\mathcal{J}_0(\mu)) &\!=\!& 1 \quad \mathrm{a.e.\ on }\quad \sigma(\mathcal{J}_0(\mu))\,.
\end{eqnarray*}
\end{theorem}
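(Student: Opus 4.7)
The strategy is to transplant Theorem~3.1 of \cite{NS} to our setting. The tridiagonal operator $\mathcal{J}_0(\mu)$ introduced in Theorem~\ref{thm-ac1} is structurally identical to the Jacobi operator studied in \cite{NS}: the diagonal entries grow like $2\mu n$, the off-diagonal entries $d_n=n^{1/2}(n+\tfrac12)^{1/4}(n-\tfrac12)^{1/4}=n+O(1)$ grow like $n$, and only the relation of $\mu$ to the physical coupling differs (here $\mu=\beta/(2\sqrt{2})$). The first task is to verify that the hypotheses of the asymptotic framework of \cite{NS} are satisfied, in particular that the ratio $d_n/(2\mu(n+\tfrac12))\to 1/(2\mu)$ as $n\to\infty$, placing us in the same regime treated there.

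The technical heart of the proof will be an asymptotic analysis of the three-term recursion
$$
  d_{n+1}u_{n+1}+\bigl[2\mu(n+\tfrac12)-E\bigr]u_n+d_n u_{n-1}=0.
$$
Dividing by $n$, the limiting recursion as $n\to\infty$ has characteristic equation $\lambda^2+2\mu\lambda+1=0$ with roots $\lambda_\pm=-\mu\pm\sqrt{\mu^2-1}$ satisfying $\lambda_+\lambda_-=1$. In the subcritical regime $\mu<1$ both roots lie on the unit circle; a Poincar\'e--Perron / Levinson perturbation argument shows that both fundamental solutions are oscillatory with comparable amplitudes at every $E\in\mathbb{R}$, so no solution is subordinate and Gilbert--Pearson subordinacy theory for Jacobi matrices yields $\sigma_\mathrm{ac}(\mathcal{J}_0(\mu))=\mathbb{R}$. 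In the supercritical regime $\mu>1$ the roots are distinct real with $|\lambda_-|<1<|\lambda_+|$, so at each $E$ there is an essentially unique exponentially decaying solution; this makes the decaying branch subordinate at every spectral point, forcing $\sigma_\mathrm{ac}(\mathcal{J}_0(\mu))=\emptyset$. The multiplicity claim $\mathfrak{m}_\mathrm{ac}(E,\mathcal{J}_0(\mu))=1$ follows automatically from the fact that any scalar Jacobi operator has a cyclic vector (the first basis element $e_0$), so its spectrum is simple.

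The hard part will be the critical case $\mu=1$, where both roots coalesce at $\lambda=-1$ and the Poincar\'e--Perron machinery breaks down since the two formal solutions share the same exponential rate. Here I would employ a discrete Liouville--Green (WKB) substitution to extract the subexponential (algebraic or logarithmic) correction and decide for which $E$ a normalizable or subordinate solution exists; combined with a form lower bound at the threshold, obtained in analogy with Theorem~\ref{thm-bound} but applied to the Jacobi quadratic form rather than $\mathbf{a}_\beta$, this identifies $\sigma(\mathcal{J}_0(1))=[0,\infty)$. Essential self-adjointness of $\mathcal{J}_0(\mu)$ on the finitely supported sequences, needed for the spectrum to be well-defined, will be secured by a Carleman-type criterion based on the linear growth of $d_n$.
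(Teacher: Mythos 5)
Your proposal follows essentially the same route as the paper, which proves this theorem simply by observing that $\mathcal{J}_0(\mu)$ coincides (up to the reparametrization $\mu=\beta/(2\sqrt{2})$) with the Jacobi operator of Naboko--Solomyak and invoking their Theorem~3.1; your additional sketch of the underlying machinery --- the characteristic roots $\lambda_\pm=-\mu\pm\sqrt{\mu^2-1}$ with $\lambda_+\lambda_-=1$, subordinacy theory in the elliptic ($\mu<1$) and hyperbolic ($\mu>1$) regimes, the delicate double-root analysis at $\mu=1$, simplicity from cyclicity of $e_0$, and Carleman's criterion --- is an accurate account of what that reference actually does. No discrepancy to report.
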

Since we have $\mu = \frac{\beta}{2\sqrt{2}}$, these two theorem in combination with the well-known spectrum of $\mathbf{H}_0$ prove the claim of Theorem~\ref{thm-ac3}.

\section{Discrete spectrum of $\mathbf{H}_\beta$}

The previous results tell us that the discrete spectrum of $\mathbf{H}_\beta$ can exist only in the subcritical situation, $\beta>2\sqrt{2}$. The following discussion is a counterpart of the subcritical case analysis of the original Smilansky model \cite{Sol1}.

We denote conventionally $N_+(\lambda,\mathbf{A}) := \mathrm{dim\,}E^{\mathbf{A}}(\lambda,\infty)\mathcal{H}$, where $E^{\mathbf{A}}(\cdot)$ is the spectral measure of $\mathbf{A}$, and similarly $N_-(\lambda,\mathbf{A})$ is the dimension of the spectral projection to the interval $(-\infty,\lambda)$. If $N_-(\lambda,\mathbf{A})$ is finite, the spectrum of $\mathbf{A}$ in the interval $(-\infty,\lambda)$ is discrete and the number of eigenvalues in this interval with the multiplicity taken into account is equal to $N_-(\lambda,\mathbf{A})$; similarly for $N_+$.

\begin{theorem}
Suppose that $\mu = \frac{\beta}{2\sqrt{2}}$ with $\beta\in(2\sqrt{2},\infty)$, then
$$
   N_-(\textstyle{\frac12}-\varepsilon,\mathbf{H}_\beta) = N_+(\mu,\mathbf{J}(\varepsilon)) = N_-(-\mu,\mathbf{J}(\varepsilon))
$$
holds for $\varepsilon \in (0,\textstyle{\frac12})$, where $\mathbf{J}(\varepsilon)$ is the Jacobi operator in $\ell^2(\mathbb{N}_0)$ generated by a matrix with zero diagonal and non-zero entries
$$
  j_{n,n-1}(\varepsilon) = j_{n-1,n}(\varepsilon) = \frac{n^{1/2}}{2(n+\varepsilon)^{1/4}(n-1+\varepsilon)^{1/4}}\,,\quad n\in \mathbb{N}\,.
$$
\end{theorem}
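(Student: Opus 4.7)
The plan is to read $N_-(\tfrac12-\varepsilon,\mathbf{H}_\beta)$ through Glazman's lemma as the maximal dimension of a subspace on which the shifted form $\mathbf{q}_\varepsilon[\Psi]:=\mathbf{a}_\beta[\Psi]-(\tfrac12-\varepsilon)\|\Psi\|^2$ is strictly negative, and then to reduce this variational problem to one on $\ell^2(\mathbb{N}_0)$ whose operator is $\mu I+\mathbf{J}(\varepsilon)$. After the Hermite expansion~\eqref{eq-expansion} I would split each channel coefficient as $\psi_n=\psi_n^{\mathrm{ext}}+\psi_n^{\mathrm{res}}$, where $\psi_n^{\mathrm{ext}}$ is the extremal function of Lemma~\ref{lem-2} with decay rate $\gamma_n=\sqrt{n+\varepsilon}$, i.e.\ $\psi_n^{\mathrm{ext}}(x)=a_n^{\pm}\mathrm{e}^{\mp\gamma_n x}$ on $\mathbb{R}_\pm$, while $\psi_n^{\mathrm{res}}$ has vanishing one-sided traces at the origin. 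An integration by parts, in which the boundary terms vanish because $\psi_n^{\mathrm{res}}(0\pm)=0$ and because $\psi_n^{\mathrm{ext}}$ solves $-u''+\gamma_n^2 u=0$, shows that the two pieces are orthogonal in the channel form $\int(|\psi'|^2+(n+\varepsilon)|\psi|^2)\,\mathrm{d}x$. The coupling term $\mathbf{b}$ depends only on the traces $\psi_n(0\pm)$ and therefore ignores $\psi_n^{\mathrm{res}}$, while the residual block is non-negative. Consequently every negative direction of $\mathbf{q}_\varepsilon$ lies in the \emph{extremal subspace} parametrised by the boundary data $(a_n^+,a_n^-)_{n\in\mathbb{N}_0}$.

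On that subspace the equality case of Lemma~\ref{lem-2} gives $\mathbf{a}_0[\Psi^{\mathrm{ext}}]-(\tfrac12-\varepsilon)\|\Psi^{\mathrm{ext}}\|^2=\sum_n\sqrt{n+\varepsilon}(|a_n^+|^2+|a_n^-|^2)$, and~\eqref{eq-b} with $b_n:=a_n^+-a_n^-$ yields $\tfrac1\beta\mathbf{b}[\Psi^{\mathrm{ext}}]=\tfrac{1}{2\mu}\sum_n\sqrt n\,\mathrm{Re}(\bar b_n b_{n-1})$ after using $\sqrt{2}/\beta=1/(2\mu)$. Passing to the even/odd combinations $s_n:=(a_n^++a_n^-)/\sqrt2$, $\tilde b_n:=(a_n^+-a_n^-)/\sqrt2$, the even block reduces to the manifestly positive form $\sum_n\sqrt{n+\varepsilon}|s_n|^2$ and decouples; only the odd block can be negative. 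Rescaling $u_n:=(n+\varepsilon)^{1/4}\tilde b_n$, this block reads
\[
  \|u\|^2+\frac{1}{\mu}\sum_{n\ge1}\frac{\sqrt n}{(n+\varepsilon)^{1/4}(n-1+\varepsilon)^{1/4}}\,\mathrm{Re}(\bar u_n u_{n-1})=\frac{1}{\mu}\bigl\langle u,(\mu I+\mathbf{J}(\varepsilon))u\bigr\rangle,
\]
the off-diagonal coefficients matching exactly the $j_{n,n-1}(\varepsilon)$ of the statement. The variational principle then delivers the first claimed identity $N_-(\tfrac12-\varepsilon,\mathbf{H}_\beta)=N_-(-\mu,\mathbf{J}(\varepsilon))$.

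For the remaining identity I would use the unitary $V$ on $\ell^2(\mathbb{N}_0)$ defined by $(Vu)_n:=(-1)^n u_n$. Since $\mathbf{J}(\varepsilon)$ has vanishing diagonal and only $(n,n\pm1)$ entries, a direct check gives $V\mathbf{J}(\varepsilon)V=-\mathbf{J}(\varepsilon)$; thus $\sigma(\mathbf{J}(\varepsilon))$ is symmetric about zero, which implies $N_-(-\mu,\mathbf{J}(\varepsilon))=N_+(\mu,\mathbf{J}(\varepsilon))$. The delicate point, and the main obstacle, is the extremal/residual splitting: one must verify that the decomposition is legitimate on the full form domain $D$ rather than on a smooth core, that the $n=0$ channel with its small weight $\sqrt\varepsilon$ causes no trouble, and that the residual block is genuinely non-negative so that the negative subspace of the full form is captured by its trace on the extremal subspace. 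Given the closedness of $\mathbf{a}_\beta$ on $D$ already established and the positivity $\int(|\psi'|^2+(n+\varepsilon)|\psi|^2)\,\mathrm{d}x\ge 0$ for Dirichlet-at-zero functions, this step is a careful adaptation of the corresponding analysis in~\cite{Sol1} rather than a genuinely new argument.
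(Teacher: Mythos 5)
Your argument is correct and follows essentially the same route as the paper: both reduce the Glazman/variational count for $\mathbf{a}_\beta-(\tfrac12-\varepsilon)\|\cdot\|^2$ to the subspace spanned by the extremal functions of Lemma~\ref{lem-2} with rates $\sqrt{n+\varepsilon}$ (the paper does this in one step by projecting onto the odd family $\{C_n\tilde\psi_{\sqrt{n+\varepsilon}}\}$ of \eqref{eq-utilde}, which absorbs your residual block and even block simultaneously since the projection coefficient \eqref{eq-cn} depends only on the jump), and then identify the restricted form with $\langle(I+\mu^{-1}\mathbf{J}(\varepsilon))g,g\rangle$ and invoke the spectral symmetry of the zero-diagonal Jacobi matrix. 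Your explicit unitary $(Vu)_n=(-1)^nu_n$ and the injectivity argument for the projection on a negative subspace are exactly the details the paper leaves implicit, so there is no gap.
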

\begin{proof}
In analogy with \cite[Theorem 3.1]{Sol1} we employ the variational principle by which
\begin{eqnarray}
  N_-(\textstyle{\frac12}-\varepsilon,\mathbf{H}_\beta) = \mathop{\mathrm{max}}_{\mathcal{F}\in\mathfrak{F}(\varepsilon)}\,\mathrm{dim\,}\mathcal{F}\,,\label{eq-N-}
\end{eqnarray}
where $\mathfrak{F}(\varepsilon)$ is the set of all subspaces $\mathcal{F}\subset D$ such that
\begin{equation}
  \mathbf{a}_\beta[\Psi] - (\textstyle{\frac12}-\varepsilon)\|\Psi\|^2_{L^2(\mathbb{R}^2)}<0 \label{eq-ineq}
\end{equation}
holds for all nonzero $\Psi\in \mathcal{F}$. We define
$$
  \|\Psi\|^2_\varepsilon := \sum_{n\in\mathbb{N}_0} \int_\mathbb{R} (|\psi_n'(x)|^2+(n+\varepsilon)|\psi_n|^2)\,\mathrm{d}x\,,\quad \Psi\sim\{\psi_n\}\,;
$$
using relation \eqref{eq-b} one can rewrite the condition \eqref{eq-ineq} as
\begin{equation}
  \|\Psi\|^2_\varepsilon + \frac{\sqrt{2}}{\beta}\sum_{n\in\mathbb{N}}\sqrt{n}\, \mathrm{Re\,}[(\bar{\psi}_n(0+)-\bar{\psi}_n(0-))(\psi_{n-1}(0+)-\psi_{n-1}(0-))] < 0\,.\label{eq-ineq2}
\end{equation}
Next we define the subspace $\tilde D(\varepsilon) \subset D$ as the set of all
$$
  \tilde \Psi \sim \{C_n\tilde \psi_{\sqrt{n+\varepsilon}}\}\,,\quad \{C_n\}\in \ell^2(\mathbb{N}_0)
$$
with functions $\tilde \psi_{\gamma}$ introduced in \eqref{eq-utilde}. One can simply check that $\|\Psi\|_\varepsilon :=\sqrt{\|\Psi\|^2_\varepsilon}$ is a norm which satisfies the parallelogram law, hence it induces an inner product,  and moreover, $\|\tilde \Psi\|_\varepsilon = \|\{C_n\}\|_{\ell^2}$. Hence one can define the projection $\Pi_\varepsilon$ onto $\tilde D(\varepsilon)$, orthogonal with respect to the mentioned inner product. For $\Psi\sim\{\psi_n\}\in D$ we then have
$$
  \tilde \Psi_\varepsilon := \Pi_\varepsilon \Psi \sim \{C_n\tilde \psi_{\sqrt{n+\varepsilon}}\}
$$
with
\begin{equation}
  C_n :=  \int_\mathbb{R} [\psi_n'(x)\tilde \psi'_{\sqrt{n+\varepsilon}}(x)+(n+\varepsilon)\psi_n(x)\tilde \psi_{\sqrt{n+\varepsilon}}(x)]\,\mathrm{d}x  =   (\psi_n(0+)-\psi_n(0-))\frac{(n+\varepsilon)^{1/4}}{\sqrt{2}}\,. \label{eq-cn}
\end{equation}
From eqs. \eqref{eq-utilde} and \eqref{eq-cn} we infer that
$$
  C_n(\tilde \psi_{\sqrt{n+\varepsilon}}(0+) - \tilde \psi_{\sqrt{n+\varepsilon}}(0-)) = \psi_n(0+) - \psi_n(0-)\,.
$$
Now we can argue similarly as in \cite{Sol1}. If we replace $\Psi$ by $\tilde \Psi_\varepsilon$ in the inequality \eqref{eq-ineq2}, the first term does not increase (since $\tilde \Psi_\varepsilon$ is a projection of $\Psi$) and the second term does not change in view of the last displayed equation. Hence the inequality is still valid for $\tilde \Psi_\varepsilon$. Therefore, if $\mathcal{F}\subset D$ belongs to $\mathfrak{F}(\varepsilon)$ then $\Pi_\varepsilon\mathcal{F}$ belongs to $\mathfrak{F}(\varepsilon)$ too. Suppose that there are two subspaces $\mathcal{F}, \mathcal{F}' \in \mathfrak{F}(\varepsilon)$ such that $\mathcal{F}\subset \mathcal{F}'$ and $\mathcal{F}\subset \tilde D(\varepsilon)$. If there exists an element $\Psi\in \mathcal{F}'$ orthogonal to $\tilde D(\varepsilon)$ with respect to the inner product induced by the norm $\|\Psi\|_\varepsilon$, then \eqref{eq-cn} implies $\psi_n(0+)= \psi_n(0-)$, $\forall n \in \mathbb{N}_0$, hence $\mathbf{b}[\Psi] = 0$ and the inequality \eqref{eq-ineq2} is not satisfied. This is a contradiction, so $\mathcal{F}'\subset \tilde D(\varepsilon)$. Thus we can rewrite \eqref{eq-N-} as
\begin{equation}\label{}
  N_-(\textstyle{\frac12}-\varepsilon,\mathbf{H}_\beta) = \mathop{\mathrm{max}}_{\mathcal{F}\in\mathfrak{F}(\varepsilon), \mathcal{F}\subset\tilde D(\varepsilon)}\,\mathrm{dim\,}\mathcal{F}\,.\label{eq-varp2}
\end{equation}
For each $\tilde \Psi\sim \{C_n\tilde \psi_{\sqrt{n+\varepsilon}}\}\in \tilde D(\varepsilon)$ we obtain using \eqref{eq-cn}
$$
  \|\tilde \Psi\|_\varepsilon^2 + \frac{1}{\beta}\mathbf{b}[\tilde \Psi] = \sum_{n\in \mathbb{N}_0} |C_n|^2 + \frac{4\sqrt{2}}{\beta}\sum_{n\in \mathbb{N}}j_{n,n-1}(\varepsilon)\mathrm{Re}\,(C_n\overline{C_{n-1}}) 
 =  \|g\|_{\ell^2(\mathbb{N}_0)}^2 + \frac{1}{\mu}(\mathbf{J}(\varepsilon )g,g)_{\ell^2} = ((I+\mu^{-1}\mathbf{J}(\varepsilon))g,g)
$$
with $\mu = \beta/(2\sqrt{2})$ and $g = \{C_n\}\in \ell^2(\mathbb{N}_0)$. The claim of the theorem now follows from \eqref{eq-varp2} and the symmetry of the spectrum of $\mathbf{J}(\varepsilon)$.
\end{proof}

This theorem does not apply if $\varepsilon = 0$ since $j_{1,0} = \infty$. However, it is sufficient to restrict the quadratic form to the subspace $\{g = \{C_n\}: C_0 = 0\}$ of codimension one. As a result of such an operation, the number of eigenvalues is changed by at most one. The limit $\varepsilon \to 0$ the leads to the Jacobi operator $\mathbf{J}_0$ the nonzero entries of which are
$$
  j_{n,n-1} = j_{n-1,n} = \frac{1}{2(1-n^{-1})^{1/4}} \,,\quad n\in \mathbb{N}\backslash \{1\}\,,
$$
then in analogy with \cite[Theorem 3.2]{Sol1} we arrive at the following conclusion:
\begin{theorem}\label{thm-sol32}
  Let $\mu = \beta/(2\sqrt{2})$ with $\beta\in (2\sqrt{2},\infty)$. Then either $N_-(\textstyle{\frac12},\mathbf{H}_\beta) = N_+(\mu,\mathbf{J}_0)$ or $N_-(\textstyle{\frac12},\mathbf{H}_\beta) = N_+(\mu,\mathbf{J}_0) + 1$.
\end{theorem}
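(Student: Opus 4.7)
The plan is to pass to the limit $\varepsilon \to 0^+$ in the identity $N_-(\tfrac12-\varepsilon, \mathbf{H}_\beta) = N_+(\mu, \mathbf{J}(\varepsilon))$ of the preceding theorem, in the spirit of \cite[Thm.~3.2]{Sol1}, and to handle the divergence $j_{1,0}(\varepsilon) = (2\varepsilon^{1/4}(1+\varepsilon)^{1/4})^{-1} \to \infty$ by means of a codimension-one restriction, as already anticipated in the remark preceding the statement.

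On the left-hand side, Theorem~\ref{thm-ac3} locates the essential spectrum of $\mathbf{H}_\beta$ on $[\tfrac12, \infty)$, so every eigenvalue of $\mathbf{H}_\beta$ lies in the open interval $(0, \tfrac12)$ and is captured by $N_-(\tfrac12-\varepsilon, \mathbf{H}_\beta)$ for sufficiently small $\varepsilon>0$; by left-continuity of the spectral counting function this yields $\lim_{\varepsilon \to 0^+} N_-(\tfrac12-\varepsilon, \mathbf{H}_\beta) = N_-(\tfrac12, \mathbf{H}_\beta)$, a finite number since the discrete spectrum below $\tfrac12$ is finite (cf.\ Theorem~\ref{thm-dis}).

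For the right-hand side I introduce the orthogonal projection $P$ of $\ell^2(\mathbb{N}_0)$ onto the codimension-one subspace $\{g : g_0 = 0\}$ and set $\mathbf{J}'(\varepsilon) := P\mathbf{J}(\varepsilon)P$, whose surviving off-diagonal entries are the $j_{n,n-1}(\varepsilon)$ with $n \geq 2$. These are uniformly bounded for $\varepsilon \in [0, \tfrac12)$, converge pointwise to the entries of $\mathbf{J}_0$ as $\varepsilon \to 0^+$, and have uniformly controlled tails, so $\mathbf{J}'(\varepsilon) \to \mathbf{J}_0$ in operator norm. The off-diagonals of $\mathbf{J}_0$ tend to $\tfrac12$, hence $\sigma_{\mathrm{ess}}(\mathbf{J}_0) = [-1, 1]$, and the assumption $\mu = \beta/(2\sqrt{2}) > 1$ places $\mu$ outside this set, so no eigenvalue can accumulate at $\mu$ and norm convergence gives $\lim_{\varepsilon \to 0^+} N_+(\mu, \mathbf{J}'(\varepsilon)) = N_+(\mu, \mathbf{J}_0)$. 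Combining this with the standard min-max interlacing bound for codimension-one compressions,
$$
  N_+(\mu, \mathbf{J}'(\varepsilon)) \leq N_+(\mu, \mathbf{J}(\varepsilon)) \leq N_+(\mu, \mathbf{J}'(\varepsilon)) + 1,
$$
and letting $\varepsilon \to 0^+$ delivers the claimed dichotomy $N_-(\tfrac12, \mathbf{H}_\beta) \in \{N_+(\mu, \mathbf{J}_0),\, N_+(\mu, \mathbf{J}_0)+1\}$. The main obstacle I anticipate is precisely this last convergence of counting functions, which requires excluding the possibility that an eigenvalue of $\mathbf{J}'(\varepsilon)$ slides through the threshold $\mu$ in the limit; a standard perturbation argument based on the continuity of the resolvent at points outside the spectrum should settle it.
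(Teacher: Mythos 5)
Your argument follows essentially the same route as the paper: restrict to the codimension-one subspace $\{g: g_0=0\}$ to tame the divergent entry $j_{1,0}(\varepsilon)$, invoke interlacing so the count changes by at most one, and pass to the limit $\varepsilon\to 0^+$ where $\mathbf{J}'(\varepsilon)\to\mathbf{J}_0$; in fact you supply more detail than the paper, which simply defers to the analogy with \cite[Theorem 3.2]{Sol1}. The only point you gloss (as does the paper) is the final convergence $N_+(\mu,\mathbf{J}'(\varepsilon))\to N_+(\mu,\mathbf{J}_0)$, which is automatic from norm convergence only when $\mu$ is not an eigenvalue of $\mathbf{J}_0$ --- knowing $\mu>1=\max\sigma_{\mathrm{ess}}(\mathbf{J}_0)$ alone gives $N_+(\mu+\delta,\mathbf{J}_0)\le N_+(\mu,\mathbf{J}'(\varepsilon))\le N_+(\mu-\delta,\mathbf{J}_0)$, and the upper bound can exceed $N_+(\mu,\mathbf{J}_0)$ by $\dim\ker(\mathbf{J}_0-\mu)$ --- so that case deserves a word if one wants the dichotomy exactly as stated.
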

 
The above two theorems allow us to reduce the task to investigation of the spectral properties of the operator $\mathbf{J}_0$. We are particularly interested what happens with the discrete spectrum when $\beta$ approaches the critical value. The behavior of $N_+(\mu,\mathbf{J}_0)$ as $\mu \to 1+$ is given by \cite[Theorem 3.3]{Sol1}, which we for the reader's convenience we reproduce here.
\begin{theorem}\label{thm-sol33}
  Let $\mathbf{J}$ be a zero-diagonal Jacobi matrix the non-diagonal entries of which are
$$
  j_{n,n-1} = j_{n-1,n} = \frac{1}{2} +\frac{q}{n}(1+o(1))\,,
$$
where $q$ is a positive constant. Then the operator $\mathbf{J}$ has infinitely many non-degenarate eigenvalues $\pm \lambda_k(\mathbf{J})$ with
$$
  \lambda_k(\mathbf{J}) = 1+ \frac{2q^2}{k^2}(1+o(1))\quad \mathrm{as}\quad k\to \infty\,,
$$
or equivalently,
$$
  N_+(\mu,\mathbf{J}) \sim \frac{q\sqrt{2}}{\sqrt{\mu-1}}\quad \mathrm{as}\quad \mu\to 1+\,.
$$
\end{theorem}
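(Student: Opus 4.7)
My plan would be to analyze the three--term recurrence $j_{n,n-1}x_{n-1}+j_{n,n+1}x_{n+1}=\lambda x_n$ in the regime where $\lambda$ is close to the edge $\pm 1$ of the essential spectrum. First I would note that $\mathbf{J}-\tfrac12(\mathcal{S}+\mathcal{S}^*)$ is compact because $j_{n,n-1}-\tfrac12=o(1)$, so $\sigma_{\mathrm{ess}}(\mathbf{J})=[-1,1]$, and the unitary conjugation $x_n\mapsto(-1)^n x_n$ sends $\mathbf{J}$ to $-\mathbf{J}$, allowing me to concentrate on the eigenvalues $\lambda_k>1$.

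Next I would derive a continuous approximation of the recurrence valid near the edge. Writing $j_{n,n-1}=\tfrac12+p_n$ with $p_n=q/n+o(1/n)$, the eigenvalue equation becomes $x_{n-1}+x_{n+1}-2x_n+2p_nx_{n-1}+2p_{n+1}x_{n+1}=(2\lambda-2)x_n$. Interpreting $x_n$ as a slowly varying function $y(n)$, the left--hand side is, to leading order, $y''(n)+(4q/n)\,y(n)$, so the leading asymptotic equation is the radial Coulomb equation
\[
-y''(n)-\frac{4q}{n}\,y(n)=-2(\lambda-1)\,y(n),
\]
with energy parameter $E=-2(\lambda-1)<0$. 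This is a classic hydrogen--type problem on the half--line whose negative eigenvalues are $E_k=-4q^2/k^2$. Inverting gives directly $\lambda_k-1\sim 2q^2/k^2$ as $k\to\infty$, which is the first formula in the theorem.

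The counting asymptotics then follow either from this explicit sequence or from a direct Bohr--Sommerfeld estimate: the classical turning point of $-y''-(4q/n)y=\kappa^2 y$ with $\kappa^2=2(\mu-1)$ is $n_*=4q/\kappa^2$, and
\[
N_+(\mu,\mathbf{J})\sim \frac{1}{\pi}\int_{0}^{n_*}\sqrt{\frac{4q}{n}-\kappa^2}\,\mathrm{d}n=\frac{2q}{\kappa}\cdot\frac{1}{\pi}\int_{0}^{1}\sqrt{\frac{1-u}{u}}\,\mathrm{d}u=\frac{q}{\kappa}=\frac{q\sqrt{2}}{\sqrt{\mu-1}},
\]
recovering the second displayed formula and being consistent with inverting $\lambda_k-1\sim 2q^2/k^2$.

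The technical obstacle is making the passage from the discrete three--term recurrence to the continuous Coulomb equation rigorous with error terms controlled uniformly in the range of $n$ contributing to the count. The $o(1)$ slack in $p_n$ means one cannot simply quote a soluble model; a clean route is a discrete Prüfer phase analysis of $x_n/x_{n-1}$, decomposing the phase into a free rotation (determined by $\tfrac12(\mathcal{S}+\mathcal{S}^*)$) plus a slow drift driven by $p_n$, and then counting rotations up to the turning region. The error terms $o(1/n)$ contribute a $o(1)$ correction to the accumulated phase, which is negligible compared to the $\mathcal{O}(1/\kappa)$ leading contribution. This is the approach followed in \cite[Theorem~3.3]{Sol1}, to which we refer for the full technical details.
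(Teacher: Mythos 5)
The paper does not actually prove this statement: Theorem~\ref{thm-sol33} is imported verbatim from \cite[Theorem~3.3]{Sol1} ``for the reader's convenience'', so there is no in-paper argument to compare yours against. Your heuristic derivation is therefore genuinely different and more informative: the identification of the effective half-line Coulomb problem $-y''-(4q/n)y=-2(\lambda-1)y$ correctly yields $E_k=-4q^2/k^2$, hence $\lambda_k-1\sim 2q^2/k^2$, and inverting gives $N_+(\mu,\mathbf{J})\sim q\sqrt2/\sqrt{\mu-1}$; the compactness and $(-1)^n$-conjugation remarks fixing $\sigma_{\mathrm{ess}}(\mathbf{J})=[-1,1]$ and the $\pm\lambda_k$ symmetry are also correct. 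Two caveats. Your Bohr--Sommerfeld display has compensating factor-of-two slips: the phase integral equals $n_*\kappa\int_0^1\sqrt{(1-u)/u}\,\mathrm{d}u=(4q/\kappa)(\pi/2)$, so dividing by $\pi$ gives $2q/\kappa$, not $q/\kappa$, and it is $2q/\kappa=q\sqrt2/\sqrt{\mu-1}$ that matches the claim. More substantively, the uniform control of the discrete-to-continuum passage down to the turning point $n_*\sim(\mu-1)^{-1}$ is precisely the hard content of Solomyak's proof, which you defer to \cite{Sol1}; so your argument is a motivation for the quoted result rather than an independent proof --- which, to be fair, places it on the same logical footing as the paper itself.
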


Since our $\mathbf{J}_0$ is a Jacobi matrix of the mentioned type with $q = \frac18$ we obtain from Theorems~\ref{thm-sol32} and~\ref{thm-sol33} the following result which, in combination with Theorems~\ref{thm-bound}, concludes the proof of Theorem~\ref{thm-dis}.
\begin{theorem}\label{thm-noe}
Let $\mu = \beta/(2\sqrt{2})$ with $\beta\in (2\sqrt{2},\infty)$, then
$$
  N_-(\textstyle{\frac12},\mathbf{H}_\beta)\sim \frac{1}{4\sqrt{2(\mu - 1)}} = \frac{1}{4\sqrt{2\left(\frac{\beta}{2\sqrt2} - 1\right)}}\quad \mathrm{as}\quad \beta\to2\sqrt{2}+\,.
$$
\end{theorem}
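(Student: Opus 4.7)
The plan is to match the Jacobi operator $\mathbf{J}_0$ to the hypothesis of Theorem~\ref{thm-sol33} and then convert the resulting asymptotics for $N_+(\mu,\mathbf{J}_0)$ into an asymptotic for $N_-(\tfrac12,\mathbf{H}_\beta)$ via Theorem~\ref{thm-sol32}.

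First I would identify the constant $q$ in Theorem~\ref{thm-sol33} for our operator $\mathbf{J}_0$. The off-diagonal entries are $j_{n,n-1} = \frac{1}{2}(1-n^{-1})^{-1/4}$ for $n\ge 2$, and a binomial (Taylor) expansion gives
$$
  (1-n^{-1})^{-1/4} = 1 + \frac{1}{4n} + O(n^{-2}),
$$
hence $j_{n,n-1} = \tfrac{1}{2} + \frac{1}{8n}(1+o(1))$. Thus $\mathbf{J}_0$ is precisely of the type covered by Theorem~\ref{thm-sol33} with $q=\tfrac18$. Substituting into the asymptotic formula of that theorem yields
$$
  N_+(\mu,\mathbf{J}_0) \sim \frac{(1/8)\sqrt{2}}{\sqrt{\mu-1}} = \frac{1}{4\sqrt{2(\mu-1)}}\quad\text{as}\quad \mu\to 1+,
$$
using the identity $\sqrt{2}/8 = 1/(4\sqrt{2})$.

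Next I would invoke Theorem~\ref{thm-sol32}, according to which $N_-(\tfrac12,\mathbf{H}_\beta)$ equals either $N_+(\mu,\mathbf{J}_0)$ or $N_+(\mu,\mathbf{J}_0)+1$. Because $N_+(\mu,\mathbf{J}_0)\to\infty$ as $\mu\to 1+$, this additive discrepancy of at most one is absorbed in the $\sim$ relation, so
$$
  N_-(\textstyle{\frac12},\mathbf{H}_\beta) \sim N_+(\mu,\mathbf{J}_0) \sim \frac{1}{4\sqrt{2(\mu-1)}} = \frac{1}{4\sqrt{2\bigl(\frac{\beta}{2\sqrt{2}}-1\bigr)}},
$$
with $\mu = \beta/(2\sqrt{2})\to 1+$ corresponding to $\beta\to 2\sqrt{2}+$. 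This is the stated formula of Theorem~\ref{thm-noe}.

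To complete the argument for Theorem~\ref{thm-dis} I would finally combine this with Theorem~\ref{thm-bound}: for $\beta>2\sqrt{2}$ the quadratic form bound gives $\mathbf{a}_\beta[\Psi]\ge\tfrac12(1-2\sqrt{2}/\beta)\|\Psi\|^2>0$, so $\mathbf{H}_\beta$ is strictly positive and its discrete eigenvalues must lie in $(0,\tfrac12)$, where the upper endpoint is the infimum of the essential spectrum established in Theorem~\ref{thm-ac3}. There is no real obstacle: the expansion identifying $q=\tfrac18$ is elementary, and the remaining content is the combination of previously invoked results. The only subtlety worth a line of comment is the codimension-one reduction from $\mathbf{J}(\varepsilon)$ to $\mathbf{J}_0$ discussed before Theorem~\ref{thm-sol32}, which is exactly what produces the possible $\pm 1$ ambiguity that the divergence of $N_+$ makes harmless.
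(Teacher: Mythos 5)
Your proposal is correct and follows essentially the same route as the paper: identify $q=\tfrac18$ from the expansion $j_{n,n-1}=\tfrac12(1-n^{-1})^{-1/4}=\tfrac12+\tfrac{1}{8n}(1+o(1))$, apply Theorem~\ref{thm-sol33} to get $N_+(\mu,\mathbf{J}_0)\sim \tfrac{1}{4\sqrt{2(\mu-1)}}$, and absorb the $\pm1$ ambiguity from Theorem~\ref{thm-sol32} into the asymptotic since $N_+$ diverges. The only difference is that you spell out the binomial expansion and the harmlessness of the additive constant, which the paper leaves implicit.
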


Finally, let us look at the opposite asymptotic regime in the subcritical case and examine the discrete spectrum for weak $\delta'$ coupling. To find the energy gap for large $\beta$ we employ a construction similar to the one in \cite{ELT1,ELT2}.

\begin{proof}[Proof of Theorem~\ref{thm-oneeig}]
First we check that the spectrum on $(-\infty,\textstyle{\frac12})$ is non-empty using a variational argument; the idea is to construct an element $\Psi^\varepsilon \in D$ such that $\mathbf{a}_\beta[\Psi^\varepsilon] <\frac12\|\Psi^\varepsilon\|^2$. Consider functions $\psi_0, \psi_1$ satisfying the conditions
$$
   \psi_0(0+)-\psi_0(0-) = -C < 0\,,\quad   \quad \psi_1(0+)-\psi_1(0-) = 1\,,
$$
and such that $\Psi=\{\psi_0,\psi_1,0,0,\dots\} \in D$. We scale the first one, $\psi_0^\varepsilon(x) := \psi_0(\varepsilon x)$, and put $\Psi^\varepsilon := \{\psi_0^\varepsilon, \psi_1,0,0, \dots\}$ which belongs again to $D$. From \eqref{eq-a0} and \eqref{eq-b} we have
$$
  \mathbf{a}_\beta[\Psi^\varepsilon]-\frac{1}{2}\|\Psi^\varepsilon\|^2 = \int_\mathbb{R}\left(|{\psi_0^\varepsilon}'(x)|^2+|\psi_1(x)|^2+|\psi_1'(x)|^2\right)\,\mathrm{d}x - \frac{\sqrt{2}}{\beta} C 
 = \int_\mathbb{R}\left(\varepsilon|\psi_0'(x)|^2+|\psi_1(x)|^2+|\psi_1'(x)|^2\right)\,\mathrm{d}x - \frac{\sqrt{2}}{\beta} C
$$
Obviously, choosing $\varepsilon$ small enough and $C$ large enough one can achieve that the right-hand side of the last equation is negative, which means that the spectrum below $\frac12$ is nonempty for any $\beta>0$.

Using further this conclusion, Theorem~\ref{thm-sol32}, and the fact that the eigenvalues of $\mathbf{J}_0$ have a single accumulation point at $1$ (and consequently, there is a $\mu$ such that there are no eigenvalues of $\mathbf{J}_0$ larger than $\mu$) we find that for $\beta$ large enough the operator $\mathbf{H}_\beta$ has exactly one simple eigenvalue.

The asymptotic expansion of this eigenvalue $\Lambda_1$ can be found by an argument similar to that used in \cite{ELT1} for the original Smilansky model. The system of equations \eqref{eq-jaceq} can be after substitution $Q_n = (n+\textstyle{\frac12})^{1/4}C_n$ rewritten as
\begin{eqnarray}
Q_1 + 2\mu \sqrt{\textstyle{\frac{1}{2}}-\Lambda_1} Q_0  &\!=\!& 0\,,\label{jaceq1}\\[.5em]
(n+1)^{1/2}Q_{n+1} + 2\mu \zeta_n(\Lambda_1)Q_n + n^{1/2}Q_{n-1} &\!=\!& 0\,, \quad n\in \mathbb{N}\,. \label{jaceq2}
\end{eqnarray}
We normalize $\|Q\|:=\sum_{n =0}^\infty |Q_n|^2 = 1$, using then $\sqrt{n}\leq \sqrt{n+\textstyle{\frac12}-\Lambda_1} = \zeta_n(\Lambda_1)$ and $\sqrt{n+1}\leq \sqrt{2(n+\textstyle{\frac12}-\Lambda_1)}$ we obtain from \eqref{jaceq2} the estimate
\begin{eqnarray}
  |Q_n| \leq \frac{1}{2\mu} |Q_{n-1}| + \frac{1}{\sqrt{2}\mu}|Q_{n+1}|\,. \label{eq-qn1}
\end{eqnarray}
In the analogy with Lemma~\ref{lem-1} we have
$$
  |Q_n|^2 \leq \frac{1}{2\mu^2} |Q_{n-1}|^2 + \frac{1}{\mu^2}|Q_{n+1}|^2\,,
$$
and therefore
$$
  \sum_{n = 1}^\infty |Q_n|^2 \leq \frac{1}{2\mu^2} \sum_{n=0}^\infty|Q_{n}|^2 + \frac{1}{\mu^2}\sum_{n= 2}^\infty |Q_{n}|^2 \leq \frac{3}{2\mu^2}\,,
$$
where we have used the mentioned normalization. From here it follows that
\begin{equation}
  |Q_0| = \left(\sum_{n = 0}^\infty |Q_n|^2 - \sum_{n = 1}^\infty |Q_n|^2 \right)^{1/2} = 1+ \mathcal{O}\left(\mu^{-2}\right)\,.\label{eq-q0}
\end{equation}
Without loss of generality we may suppose that $Q_0$ is positive. From \eqref{eq-qn1} with $n = 2$ with the use of the normalization we obtain
$$
  |Q_2| \leq \frac{1}{2\mu} + \frac{1}{\sqrt{2}\mu}\quad \Rightarrow\quad Q_2 = \mathcal{O}\left(\mu^{-1}\right)\,.
$$
Furthermore, from \eqref{jaceq2} and \eqref{eq-q0} we get
$$
  Q_1 = \frac{1}{2\mu}+\mathcal{O}\left(\mu^{-2}\right)\,.
$$
Finally, from \eqref{jaceq1} we obtain $\left(\frac{1}{2}-\Lambda_1\right)^{1/2} = -\frac{Q_1}{2\mu Q_0} = -\frac{1}{4\mu^2} + \mathcal{O}\left(\mu^{-3}\right)$, or equivalently
$$
  \frac{1}{2}-\Lambda_1 = \frac{1}{16\mu^4}+\mathcal{O}\left(\mu^{-5}\right) = \frac{4}{\beta^4} + \mathcal{O}\left(\beta^{-5}\right)\,,
$$
which concludes the proof.
\end{proof}

\section{Acknowledgement}
The research was supported by the Czech Science Foundation (GA\v{C}R) within the project No. 17-01706S. J. L. was also supported by the research programme ``Applied Mathematics'' of the Faculty of Science of the University of Hradec Kr\'alov\'e. The authors are
obliged to the reviewer for suggestions that helped to improve
the presentation.

\section*{References}


\begin{thebibliography}{10}
\expandafter\ifx\csname url\endcsname\relax
  \def\url#1{\texttt{#1}}\fi
\expandafter\ifx\csname urlprefix\endcsname\relax\def\urlprefix{URL }\fi
\expandafter\ifx\csname href\endcsname\relax
  \def\href#1#2{#2} \def\path#1{#1}\fi

\bibitem{Smi1}
U.~Smilansky, Irreversible quantum graphs, Waves Random Media 14 (2004)
  S143--S153.

\bibitem{Sol4}
M.~Solomyak, On a differential operator appearing in the theory of irreversible
  quantum graphs, Waves Random Media 14 (2004) S173--S185.

\bibitem{Sol1}
M.~Z. Solomyak, On the discrete spectrum of a family of differential operators,
  Functional Analysis and Its Applications 38~(3) (2004) 217--223.

\bibitem{EvSo1}
W.~D. Evans, M.~Solomyak, Smilansky{'}s model of irreversible quantum graphs.
  {I}: {T}he absolutely continuous spectrum, J. Phys. A: Math. Gen. 38 (2005)
  4611--4627.

\bibitem{EvSo2}
W.~D. Evans, M.~Solomyak, Smilansky{'}s model of irreversible quantum graphs.
  {II}: {T}he point spectrum, J. Phys. A: Math. Gen. 38 (2005) 7661--7675.

\bibitem{Sol2}
M.~Solomyak, On a mathematical model of irreversible quantum graphs, St.
  Petersburg Math. J. 17 (2006) 835--864.

\bibitem{Sol3}
M.~Solomyak, On the limiting behaviour of the spectra of a family of
  differential operators, J. Phys. A: Math. Gen. 39~(33) (2006) 10477--10489.

\bibitem{NS}
S.~N. Naboko, M.~Solomyak, On the absolutely continuous spectrum in a model of
  an irreversible quantum graph, Proc. London Math. Soc. 92~(1) (2006)
  251--272.

\bibitem{RoSo}
G.~Rozenblum, M.~Solomyak, On a family of differential operators with the
  coupling parameter in the boundary condition, J. Comput. Appl. Math. 208~(1)
  (2007) 57--71.

\bibitem{Gu}
I.~Guarneri, Irreversible behaviour and collapse of wave packets in a quantum
  system with point interactions, J. Phys. A: Math. Theor. 44 (2011) 485304.

\bibitem{BaE1}
D.~Barseghyan, P.~Exner, A regular version of {S}milansky model, J. Math. Phys.
  55 (2014) 042104.

\bibitem{BaE2}
D.~Barseghyan, P.~Exner, A magnetic version of the {S}milansky-{S}olomyak
  model, J. Phys. A: Math. Theor. 50 (2017) 485203.

\bibitem{ELT2}
P.~Exner, V.~Lotoreichik, M.~Tater, On resonances and bound states of
  {S}milansky {H}amiltonian, Nanosystems: Phys. Chem. Math. 7 (2016) 789--802.

\bibitem{ELT1}
P.~Exner, V.~Lotoreichik, M.~Tater, Spectral and resonance properties of the
  {S}milansky {H}amiltonian, Phys Lett A. 381 (2017) 756--761.

\bibitem{AGHH}
S.~Albeverio, F.~Gesztesy, R.~{H\o egh}-Krohn, H.~Holden, Solvable Models in
  Quantum Mechanics, second ed., with an appendix by {P}. {E}xner Edition, Vol.
  350, AMS Chelsea Publishing, Providence, R.I., 2005, ISBN
  978--0--8218--3624--8.

\bibitem{ENZ1}
P.~Exner, H.~Neidhardt, V.~Zagrebnov, Potential approximations to $\delta'$: an
  inverse {K}lauder phenomenon with norm-resolvent convergence, Commun. Math.
  Phys. 224 (2001) 593--612.
  
\bibitem{BS}
M.~Sh.~Birman and M.~Z.~Solomyak, Spectral theory of selfadjoint operators in 
  Hilbert space, Leningrad. Univ., Leningrad, 1980; English transl., Reidel, 
  Dordrecht, 1987.
  
\bibitem{BEL}
J.~Behrndt, P.~Exner, V.~Lotoreichik, Schr\"odinger operators with $\delta$- and 
  $\delta'$-interactions on Lipschitz surfaces and chromatic numbers of associated 
  partitions, Rev. Math. Phys. 26 (2014) 1450015  

\end{thebibliography}
\end{document}